\definecolor{darkred}{rgb}{0.6,0,0}
\definecolor{darkpurple}{rgb}{0.5,0,0.5}
\def\L{\mathcal{L}}
\def\U{\mathcal{U}}
\def\hc{\text{h.c.}}
\def\id{\mathbb{I}}
\def\rank{\text{rank}}
\def\npar{\texttt{\#}}
\def\U1L{$\mathrm{U(1)}_L$}
\newtheorem{theorem}{Theorem}
\newtheorem{lemma}{Lemma}
\definecolor{aqua}{rgb}{0.4, 0.6, 0.7}
\definecolor{avblue}{rgb}{0.0, 0.0, 0.8}
\definecolor{asparagus}{rgb}{0.53, 0.66, 0.42}
\newcommand{\AddrIFIC}{%
  Instituto de F\'{i}sica Corpuscular, CSIC-Universitat de Val\`{e}ncia, 46980 Paterna, Spain}
\newcommand{\AddrFISTEO}{%
  Departament de F\'{\i}sica Te\`{o}rica, Universitat de Val\`{e}ncia, 46100 Burjassot, Spain}
\begin{document}

\begin{center}
\vspace*{15mm}

\vspace{1cm}
{\Large \bf 
Minimal Majorana neutrino mass models
} \\
\vspace{1cm}

{\bf Antonio Herrero-Brocal$^{\text{a}}$, Avelino Vicente$^{\text{a,b}}$}

 \vspace*{.5cm} 
 $^{(\text{a})}$ \AddrIFIC \\\vspace*{.2cm} 
 $^{(\text{b})}$ \AddrFISTEO

 \vspace*{.3cm}
\href{mailto:antonio.herrero@ific.uv.es}{antonio.herrero@ific.uv.es},
\href{mailto:avelino.vicente@ific.uv.es}{avelino.vicente@ific.uv.es}
\end{center}

\vspace*{10mm}
\begin{abstract}\noindent\normalsize
We present an extension of the Casas-Ibarra parametrization that applies to all possible Majorana neutrino mass models. This framework allows us to systematically identify minimal models, defined as those with the smallest number of free parameters. We further analyze the phenomenologically relevant combination of the Yukawa matrix, $y^\dagger y$, and show that in certain scenarios it exhibits an unexpected reduction in the number of free parameters, depending on just one real degree of freedom. Finally, the application of our results is illustrated in specific models, which can be tested or falsified due to their definite experimental predictions in heavy neutrino and charged lepton flavor violating decays.
\end{abstract}

\section{Introduction}\label{sec:Intro}

The lack of neutrino masses in the Standard Model (SM) of particle physics is one of its most important shortcomings. Nowadays, after decades of experimental and theoretical progress in the determination of neutrino oscillation parameters~\cite{deSalas:2020pgw}, an extension of the SM that can account for neutrino masses and lepton mixing, and explain its particularities, is more necessary than ever.

Many neutrino mass models have been put forward along the years (see~\cite{Boucenna:2014zba,King:2017guk,Cai:2017jrq} for some recent reviews). Even within the most popular option, Majorana neutrinos, there is a broad variety of models, which differ by their symmetries and particle contents, the energy scale at which neutrinos acquire their masses or the level in perturbation theory at which they are induced, among other features. Testing these models experimentally is of upmost importance. In principle, there are many ways to do it, ranging from searches of new states at high-energy colliders to the study of rare processes in low-energy experiments. However, in most models, the large number of free parameters makes it impossible to make any definite predictions. In these cases, fully falsifying a model becomes very hard, and one can only search for some generic signals and disfavor it in case they are not observed.

In this work we consider minimal Majorana neutrino mass models, defined by their small number of free parameters. In contrast to the most common scenarios, these models allow for definite numerical predictions for some observables which, if measured, would either provide a strong support to these models or rule them out completely. In order to characterize them, we will introduce an extended Casas-Ibarra parametrization that will allow us to easily count the number of free parameters in a given model. Armed with this tool, we will prove the existence of models with a very minimal number of free parameters, just two real degrees of freedom. These scenarios are absolutely predictive, with some observables, such as the rates of flavor violating decays $\ell_\alpha \to \ell_\beta \gamma$, with $\alpha \neq \beta$, depending on a single real number.

The rest of the paper is structured as follows. In the next Section we present the extended Casas-Ibarra parametrization. In Section~\ref{sec:CIPheno} we consider some combinations of Yukawa matrices, of special relevance for the phenomenology of many models, and find a simple formula for the number of free parameters. We also point out some cases of interest that depend on just one free real parameter. Some phenomenological applications of our general analysis to specific example models are given in Sec.~\ref{sec:appli}. Finally, we summarize and conclude in Sec.~\ref{sec:summ}. Additional details, including some mathematical proofs, are given in the Appendices.

\section{The extended Casas-Ibarra parametrization}
\label{sec:Casas}

The neutrino mass matrix of any model that induces Majorana neutrino masses can be written as
\begin{equation} \label{eq:mnu}
m_\nu = f \, Y^T \, M \, Y \, ,
\end{equation}
where $f$ is a complex number, $Y$ is a general $n \times 3$ matrix and $M$ is a symmetric $n\times n$ matrix.~\footnote{It is easy to convince oneself that this is the case. It is straightforward for models in which neutrinos couple to the neutrino mass mediators by means of just one Yukawa matrix, $Y$. In this case, the requirement $m_\nu = m_\nu^T$ leads directly to Eq.~\eqref{eq:mnu}. In models involving two distinct Yukawa matrices, $y_1$ and $y_2$, one can always rewrite the neutrino mass matrix in the form of Eq.~\eqref{eq:mnu}, as shown in Appendix~\ref{app:Master}.} The Yukawa matrix $Y$ plays a central role in phenomenological applications and is typically expressed in terms of the model parameters contained in $f$ and $M$, along with the entries of the $m_\nu$ matrix, measured in neutrino oscillation experiments~\cite{Casas:2001sr,Cordero-Carrion:2018xre,Cordero-Carrion:2019qtu,Heeck:2012fw,Chen:2025cor}. This is commonly referred to as a \textit{parametrization} of the Yukawa matrix $Y$, which simply corresponds to a solution of Eq.~\eqref{eq:mnu}. Since the solutions to this matrix equation are not unique, all parametrizations lead to some free parameters in $Y$, degrees of freedom that cancel out in the combination $Y^T \, M \, Y$. The numerical values of these parameters can be taken freely (sometimes within certain ranges), without affecting Eq.~\eqref{eq:mnu}. A general and complete parametrization applicable to any Majorana neutrino mass model is provided by the so-called master parametrization~\cite{Cordero-Carrion:2018xre,Cordero-Carrion:2019qtu}. However, here we are interested in a parametrization that allows us to count the free parameters in $Y$ in a more straightforward way. Since Eq.~\eqref{eq:mnu} is formally equivalent to the Type-I Seesaw formula~\cite{Minkowski:1977sc,Yanagida:1979as,Mohapatra:1979ia,GellMann:1980vs,Schechter:1980gr}, this can be achieved by means of a simple extension of the Casas-Ibarra parametrization~\cite{Casas:2001sr}.

Since both $M$ and $m_\nu$ are symmetric matrices, they admit a Takagi decomposition, 
\begin{align} \label{eq:Udef}
U^T m_\nu \, U = D_m \equiv \textup{diag}(m_1,m_2,m_3) \,, & &  V^* M V^\dagger = D_M \equiv \textup{diag}(M_1,M_2,\dots, M_N) \, ,
\end{align}
where $U$ and $V$ are unitary matrices. We can make use of this to write
\begin{align} 
 D_{m}  &= f \, U^T \, Y^T \,V^T \,D_M \,V \,Y \, U\, .
\end{align}
The symmetry of $D_m$ allows us to define its square root, $D_{\sqrt{m}}$, such that $D_{\sqrt{m}}^T D_{\sqrt{m}} = D_m$. We define $D_{\sqrt{m}}$, depending on the rank of $m_\nu$, denoted $r_m$, as the $r_m \times 3$ matrix 
\begin{equation} \label{eq:Dsqrtm}
D_{\sqrt{m}} = \left\{ \begin{array}{cl}
\textup{diag}\left( \sqrt{m_1},\sqrt{m_2},\sqrt{m_3} \right) \, , \quad & \textup{if} \, \,\, r_m =3 \, , \\
& \\
\begin{pmatrix}
  0 &\sqrt{m_2} & 0\\
  0 & 0        & \sqrt{m_3}\\
 \end{pmatrix} \cdot P \, , \quad & \textup{if} \, \, \, r_m = 2 \, ,
\end{array} \right.
\end{equation}
where
\begin{equation} \label{eq:Pdef}
P = \left\{ \begin{array}{cl}
\id_3 \, , \quad & \text{for neutrino normal hierarchy (NH)} \, , \\
& \\
\begin{pmatrix}
0 & 0 & 1\\
0 & 1 & 0\\
1 & 0 & 0
\end{pmatrix} \, , \quad & \text{for neutrino inverted hierarchy (IH)} \, ,
\end{array} \right.
\end{equation}
with $\id_n$ the $n \times n$ identity matrix. Defined in this form, $D_{\sqrt{m}}$ is always a full rank matrix. This implies that, even in the case $r_m=2$, the matrix $D_{\sqrt{m}}$ admits a right inverse, $D_{\sqrt{m}}^{-1}$.~\footnote{This is relevant for the discussion that follows, since in this case one can claim that the two matrices are related by $\mathcal{R}$. The argument is as follows: let $A$ be an $m \times n$  matrix and $B$ an $s \times n$ matrix such that $A^T A = B^T B$. If $m < n$ and $A$ is full rank, then $A$ admits a right inverse, $A^{-1}$. Thus, the equation
\begin{equation}
 B = \mathcal{R} \, A \, ,
\end{equation}
has a unique solution for $\mathcal{R}$ and this solution satisifies
\begin{equation}
 \mathcal{R}^T \, \mathcal{R} = (A^{-1})^T B^T B  A^{-1}  =(A^{-1})^T A^T A  A^{-1} = \id_{m} \, .
\end{equation}
} Therefore, we obtain the relation
\begin{equation} \label{eq:YRrelation}
 \sqrt{ f}\,\sqrt{D_M}\,  V \, Y \, U \,  = \mathcal{R} \, D_{\sqrt{m}}\, ,
\end{equation}
where $\mathcal{R}$ is an $n \times r_m$ matrix satisfying $\mathcal{R}^T \mathcal{R} = \id_{r_m}$ and $\sqrt{D_M}$ is defined by $\sqrt{D_M}_{ii} =\sqrt{(D_M)_{ii}}$. To obtain an analytical expression for $Y$, we decompose $D_M$ into two blocks: one containing the real and positive singular values and the other consisting of zeros,
\begin{equation}
\sqrt{D_M}\equiv 
\begin{pmatrix}
\sqrt{D_{r_M}} & 0\\
0                     & 0
\end{pmatrix} \, ,
\end{equation}
where $D_{r_M}$ is an $r_M \times r_M$ matrix with $r_M \equiv \text{rank}(M)$. In the same spirit, we split $\mathcal{R}$ into two blocks,
\begin{equation}
\mathcal{R} = 
\begin{pmatrix}
R_{r_M}\\
R_{n-r_M}
\end{pmatrix} \, .
\end{equation}
By applying Eq.~\eqref{eq:YRrelation} it is trivial to show that $R_{n-r_M} = 0_{(n-r_M)\times r_m}$ and $R_{r_M}^T R_{r_M} \equiv R^T R = \id_{r_m}$. Thus, the parametrization for $Y$ is given by
\begin{equation} \label{eq:Yparam}
Y \equiv
\begin{pmatrix}
Y_{r_M} \\
Y_{n-r_M}
\end{pmatrix} = \frac{1}{\sqrt{f}} \,
V^\dagger  \, \begin{pmatrix}
 \, \sqrt{D_{r_M}}^{-1} R   \, D_{\sqrt{m}} \, U^\dagger \,\\
X
\end{pmatrix} \, ,
\end{equation}
where $Y_{r_M}$ is a general $r_M \times 3$ matrix while $X$ is a completely free $(n-r_M)\times 3$ matrix. Eq.~\eqref{eq:Yparam} constitutes the extended Casas-Ibarra parametrization.

\section{Yukawa combinations of phenomenological relevance}
\label{sec:CIPheno}

The extended Casas-Ibarra parametrization presented in the previous Section allows one to compute the number of free parameters in the Yukawa matrices of a given Majorana neutrino mass model, $\npar_Y$ (in the following, $\npar_A$ denotes the number of \textit{real} parameters in the matrix $A$). This number is of crucial importance when comparing experimental data with the predictions of the model. In particular, the parameters of the neutrino sector are expected to play a central role in a wide range of phenomena, including lepton flavor violation, leptogenesis, and collider signatures, among others. One might naively expect that all the degrees of freedom in $Y$ affect each observable independently. If this were the case, at least $\npar_Y$ distinct measurements would be required to fully test a model characterized by a given set of observables. In practice, however, degeneracies are common. Most observables are typically governed by combinations of the form $y_i^\dagger y_i$, where $y_i$ is a submatrix of $Y$ (including, in some cases, $Y$ itself). As a consequence of this, the number of parameters relevant for a specific observable can be smaller than $\npar_Y$. In this section, with the goal of finding minimal models, we focus on counting the free parameters of two matrices particularly relevant for phenomenology: $Y$ and $y_i^\dagger y_i$.
\subsection{\boldmath $Y$ matrix}
\label{subsec:freeparam}

This case follows trivially from Eq.~\eqref{eq:Yparam}. If the model does not impose additional constraints on the Yukawa couplings of the theory, then $X$ is a free $(n-r_M) \times 3$ matrix, which implies $\npar_X = 6(n-r_M)$. Moreover, the free parameters of $Y_{r_M}$ are encoded in $R$. To count $\npar_R$ we simply make use of the orthonormality of its columns. Therefore, the independent parameters in $Y$ are given by
\begin{equation} \label{eq:Pdef}
\npar_Y = 6n - 2 r_M \left(3-r_m \right)-r_m \left(r_m+1 \right) - \npar_{\text{extra}} 
\end{equation}
where $\npar_{\text{extra}}$ accounts for possible extra constraints on the Yukawa $Y$ (for instance, requiring it to be antisymmetric). One can read from this expression the minimal number of free parameters that a non-restricted model can have, i.e., a model without additional constraints on the Yukawa couplings. First, let us note that Eq.~\eqref{eq:mnu} implies
\begin{equation}
r_m \leq \text{min}\left( \rank\left(Y \right), \,\rank \left(M \right) \right) \, ,
\end{equation}
with $\rank\left(Y \right) \leq \text{min}(n,3)$ and $\rank \left(M\right)=r_M \leq n$. Thus, the minimal scenenario with $\npar_{\text{extra}}=0$ will be given by 
\begin{align} \label{eq:minimalcond}
  n=r_M =r_m = 2 \, ,
\end{align}
with
\begin{align}
  \npar_Y=2 \, .
\end{align}
We conclude that minimal Majorana neutrino mass models have just \textbf{\boldmath two real degrees of freedom in the Yukawa matrix $Y$}.

\subsection{\boldmath $y_i^\dagger y_i$ matrix}

We begin by providing a proper definition of $y_i$. In general the Yukawa contributing to neutrino masses, $Y_{r_m}$, is composed of several individual Yukawas. We define $y_i$ as the different $s_i\times 3$ submatrices of $Y_{r_M}$, with $s_i$ the dimension of each individual Yukawa:
\begin{equation}
Y = Y_{r_M} = \begin{pmatrix}
y_1\\
y_2\\
\vdots\\
y_p
\end{pmatrix} \, .
\end{equation}
Now, motivated by the search for minimal models, we assume that the matrix $X$ does not exist, i.e., $n = r_M$. Actually, this is already justified since minimal models verify $n=r_M$. Nevertheless, a similar procedure applies to the case with a non-zero $X$ matrix. Then, using Eq.~\eqref{eq:Yparam}, one can write
\begin{equation}
y_i = \frac{1}{\sqrt{f}} \, v_i^\dagger \sqrt{D_{r_M}}^{-1} \, R \, D_{\sqrt{m}} \, U^\dagger \, ,
\end{equation}
where $V$ is expressed in terms of the $r_M \times s_i$ submatrices $v_i$ as
\begin{equation}
V = \begin{pmatrix}
v_1 & v_2 & \dots & v_p
\end{pmatrix} \, .
\end{equation}
Furthermore, in this framework $y_i^\dagger y_i$ takes the form
\begin{equation}
y_i^\dagger y_i =\frac{1}{\left|f\right|}\, U \, D_{\sqrt{m}}^T \, R^\dagger \sqrt{D_{r_M}}^{-1} \, v_i \,v_i^\dagger \,  \sqrt{D_{r_M}}^{-1} R D_{\sqrt{m}}  U^\dagger \, .
\end{equation}
The unitarity of $V$ implies $v_i^\dagger v_j = \id_{s_i} \,\delta_{ij}$, but nothing can be said in general of the combination $v_i v_j^\dagger$. Then, one generally expects $\npar_{y_i^\dagger y_i} = \text{min} \left(\npar_{y_i}, 9\right) = \text{min} \left(\npar_{Y}, 9\right)$, with 9 the maximum for a Hermitian $3\times 3$ matrix. However, there are special cases in which this number is reduced. We shall call them \textit{reduced scenarios} and discuss in the following two examples.

\subsubsection{Reduced scenario 1}

This scenario is defined by the following conditions:
\begin{enumerate}
  \item $D_{r_M}$ is completely degenerate: $D_{r_M} = \Lambda \, \id_{r_M }$.
  \item $R$ is orthogonal.
  \item $v_i v_i^\dagger = \id_{r_M}$.
\end{enumerate}
Let us highlight two relevant facts about these conditions. First, the second condition requires $R$ to be square. Since $R$ is an $r_M \times r_m$ matrix with $r_m \leq 3$, we must have either $r_M = r_m = 3$ or $r_M = r_m = 2$. Secondly, the third condition is always satisfied when considering the full combination $Y^\dagger Y$, as in this case we are dealing with the full unitary matrix $V$.

Under these requirements, the matrix of interest can be written as
\begin{equation} \label{eq:paramH}
y_i^\dagger y_i = \frac{1}{\left|f\right|\Lambda} U \, D_{\sqrt{m}}^T \, R^\dagger R D_{\sqrt{m}}  U^\dagger \equiv \frac{1}{\left|f\right|\Lambda} U \, D_{\sqrt{m}}^T \, H D_{\sqrt{m}}  U^\dagger \, .
\end{equation}
Thus, we can parametrize it in terms of the Hermitian positive-definite orthogonal $r_m \times r_m$ matrix $H=R^\dagger R$. We refer to Appendix~\ref{app:HH0} for a rigorous study of this matrix and report here our main results. First of all, the number of real degrees of freedom contained in $H$ (and then in $y_i^\dagger y_i$) is
\begin{equation}
  \npar_H = \frac{r_m(r_m-1)}{2} \, .
\end{equation}
Explicit expressions for the cases $r_m=3$ and $r_m=2$ are given as follows:

\subsubsection*{$\boldsymbol{r_M=r_m=3}$}

In this case $H$ has three free parameters, $a,b,c \in \mathbb{R}$, and can be parametrized as
\begin{equation} \label{eq:rs1N3}
H= \frac{1}{x^2}
\begin{pmatrix}
c^2 + \left(a^2 + b^2\right)\sqrt{1 + x^2} & 
a b \left(-1 + \sqrt{1 + x^2}\right)+i a x^2  & 
a c\left(1-\sqrt{1 + x^2} \right)+ i b x^2  \\
a b  \left(-1 + \sqrt{1 + x^2}\right)-i a x^2 & 
b^2 + \left(a^2 + c^2\right)\sqrt{1 + x^2} & 
a b \left(-1 + \sqrt{1 + x^2}\right)+i c x^2  \\
a c\left(1-\sqrt{1 + x^2} \right) - i b x^2  & 
a b \left(-1 + \sqrt{1 + x^2}\right)- i c x^2 & 
a^2 + \left(b^2 + c^2\right)\sqrt{1 + x^2}
\end{pmatrix} \, ,
\end{equation}
with $x = \sqrt{a^2 +b^2 +c^2}$.

\subsubsection*{$\boldsymbol{r_M = r_m =2}$}

In this case $H$ can be written in terms of only one free parameter, $a \in \mathbb{R}$, with
\begin{equation} \label{eq:rs1}
H = 
\begin{pmatrix}
\sqrt{1+a^2} & i a\\
-i a         & \sqrt{1+a^2}
\end{pmatrix} \, .
\end{equation}

\subsubsection{Reduced scenario 2}

This scenario is defined by the following conditions:
\begin{enumerate}
	\item $D_{r_M}$ is completely degenerate: $D_{r_M} = \Lambda \, \id_{r_M}$.
	\item $R$ is orthogonal.
	\item $v_i^T v_i = 0$.
\end{enumerate}
Again, the second condition allows us to write
\begin{equation} \label{eq:paramH0}
y_i^\dagger y_i = \frac{1}{\left|f\right| \Lambda} U \, D_{\sqrt{m}}^T  \, R^\dagger v_i v_i^\dagger R D_{\sqrt{m}}  U^\dagger \equiv \frac{1}{\left|f\right| \Lambda} U \, D_{\sqrt{m}}^T \, H_0 D_{\sqrt{m}}  U^\dagger\, .
\end{equation}
Thus, we can parametrize it in terms of $H_0=R^\dagger v_i v_i^\dagger R$ which is an Hermitian positive-semidefinite $r_m \times r_m$ matrix such that $H_0 H_0^T = 0$. Again, we refer to Appendix~\ref{app:HH0} for a detailed derivation. Our results show that the number of real degrees of freedom in $H_0$, and then in $y_i^\dagger y_i$, is
\begin{equation}
  \npar_{H_0} = \frac{r_m(r_m-1)}{2} \, .
\end{equation}
Explicit expressions for the cases $r_m=3$ and $r_m=2$ are given as follows:

\subsubsection*{$\boldsymbol{r_M=r_m=3}$}

In this case $H_0$ has three free parameters, $a,b,c \in \mathbb{R}$, and can be parametrized as
\begin{equation} \label{eq:rs2N3}
H_0= \frac{1}{x}
\begin{pmatrix}
a^2+b^2      & b c+i a x  & -a c+ i b x  \\
b c -i a x   & a^2 + c^2  &  a b + i c x \\ 
-a c- i b x  & a b -i c x &  b^2 +c^2 
\end{pmatrix} \, ,
\end{equation}
where, again, $x = \sqrt{a^2 +b^2 +c^2}$.

\subsubsection*{$\boldsymbol{r_M=r_m=2}$}

In this case $H_0$ can be written in terms of only one free parameter, $a \in \mathbb{R}$, and is given by
\begin{equation} \label{eq:rs2}
H_0 = a
\begin{pmatrix}
\pm 1 & i\\
-i    & \pm 1
\end{pmatrix} \, ,
\end{equation}
where the signs of the diagonal terms must be chosen to ensure that $H_0$ is positive-semidefinite. \\

Interestingly, in both reduced scenarios, the case $r_M=r_m=2$ leads to just \textbf{\boldmath one real free parameter in the $y_i^\dagger y_i$ combination}. As one may expect, this implies definite phenomenological predictions.

\section{Phenomenological applications}
\label{sec:appli}

We show now two phenomenological applications that follow from our analysis, in two minimal Majorana neutrino mass models, one for each of the reduced scenarios discussed in the previous Section. We will focus in the $n=r_M=2$ case since this one is a more predictive scenario. However, the generalization to $n=r_M=3$ is straightforward.~\footnote{Scenarios with $n=r_M=3$ can also be regarded as minimal, in the sense that they can be parametrized in terms of $H$ or $H_0$. However, in this case the matrices $Y$ and $y_i^\dagger y_i$ depends on six and three parameters, respectively, rather than just two and one.}

\subsection{Minimal Type-I Seesaw}
\label{subsec:seesaw}

Let us consider a minimal version of the Type-I Seesaw~\cite{Minkowski:1977sc,Yanagida:1979as,Mohapatra:1979ia,GellMann:1980vs,Schechter:1980gr} with just two singlet right-handed (RH) neutrinos, $N_1$ and $N_2$. In this model $n =r_M= r_m = 2$. The pieces of the Lagrangian relevant for neutrino mass generation are
\begin{equation}
-\L = y_N \, \bar{L} \tilde{H} N + \frac{1}{2} M_N  \bar{N}^c N + \hc \, ,
\end{equation}
where $\tilde H = i \sigma_2 H^*$, with $\sigma_2$ the second Pauli matrix, $y_N$ is a $3 \times 2$ general Yukawa matrix and $M_N$ is a symmetric $2 \times 2$ matrix that can be taken to be diagonal without loss of generality. Electroweak symmetry breaking takes place in the standard way, with the Higgs developing the vacuum expectation value (VEV)
\begin{equation}
  \langle H \rangle = \frac{v}{\sqrt{2}} \begin{pmatrix}
    0 \\
    1 \end{pmatrix} \, .
\end{equation}
This induces a neutral fermion mass matrix which, in the basis $\left(\nu_L^c \, \, \, N \right)$, can be expressed as
\begin{align}
\mathcal M =
    \begin{pmatrix}
    0 & m_D\\
    m_D^T & M_N
    \end{pmatrix} \, ,
\end{align}
with $m_D = y_N \, v/\sqrt{2}$. We assume the hierarchy $\left( m_D M_N^{-1} \right)_{ij} \ll 1$ to be satisfied. This allows one to compute the light neutrinos mass matrix at leading order in seesaw expansion as
\begin{align} \label{eq:mnuseesaw}
m_\nu &= - m_D M_N^{-1} m_D^T = -\frac{1}{2} \, y_N \, v^2 \, M_N^{-1} \, y_N^T \, .
\end{align}
Therefore, comparing with Eq.~\eqref{eq:mnu} one finds the \textit{dictionary}
\begin{align}
f = -\frac{1}{2}\, , \quad Y = y_N^T \, , \quad M = v^2 \, M_N^{-1} \, .
\end{align}
Using Eq.~\eqref{eq:Yparam}, we recover the well-known Casas-Ibarra parametrization for the case of two RH neutrinos~\cite{Ibarra:2003up},
\begin{equation}
y_N^T = Y = -i \frac{\sqrt{2}}{\sqrt{v}} \, \sqrt{M_N} \, R \, D_{\sqrt{m}} \, U^\dagger \, .
\end{equation}
Since $R$ contains only two real free parameters, this minimal model is highly predictive. In particular, the decays of the heavy neutrinos are strongly constrained, in contrast to the case with three RH neutrinos. Let us consider one of the most important decay channels, $N \to \ell_\alpha W$. The decay width is given by~\cite{Atre:2009rg}
\begin{align} \label{eq:Ndecay}
\Gamma \left(N_i \to \ell_\alpha W \right) &= \frac{g^2}{64 \pi v^2} \frac{\left( 1 -\mu_W \right)^2\left( 1 +2\mu_W \right)}{\mu_W} \left| \left( R \, D{\sqrt{m}} \, U^\dagger \right)_{i \alpha } \right|^2 \nonumber \\
 &=\frac{g^2}{64 \pi v^2} \frac{\left( 1 -\mu_W \right)^2\left( 1 +2\mu_W \right)}{\mu_W} \left| \left[
\begin{pmatrix}
\sqrt{1-z^2} & z \\
-z & \sqrt{1-z^2}
\end{pmatrix}
D_{\sqrt{m}} \, U^\dagger \right]_{i \alpha } \right|^2 \, ,
\end{align}
where $g$ is the $SU(2)_L$ coupling constant, and we have defined $\mu_W = m_W^2/(M_N)_{ii}^2$, with $m_W$ the W-boson mass. The second equality corresponds to a specific parametrization of $R$ with $z = z_s + i \, z_i$, the complex free parameter. One might naively expect that, since there are only two real parameters, measuring two decay processes would be sufficient to fully determine $R$ and, consequently, predict the third decay. However, it is important to note that Eq.~\eqref{eq:Ndecay} is not injective with respect to the $R$ parameters. Therefore, observing two processes of $N_i \to \ell_\alpha W$ does not uniquely determine $z_r$ and $z_i$, as illustrated in Fig.~\ref{fig:Ndecay}. This figure has been obtained by fixing all neutrino oscillation parameters to their best-fit values for NH and IH according to the global fit~\cite{deSalas:2020pgw}. Indeed, we need three decay processes to completely determine $R$, allowing one to predict the remaining decays and other channels that depend on $y_N$. Importantly, Fig. \ref{fig:Ndecay} shows that the ratios of branching ratios (BRs) quickly approach their asymptotic values as $|z_r|$ and $|z_i|$ increase. In this asymptotic regime, the ratios become essentially fully correlated: measuring a single ratio is enough to determine all the others. In particular, for $|z_r|, |z_i| \gtrsim 2$, measuring just one process allows one to infer the remaining ratios with high confidence.

\begin{figure}[ht!]
    \centering
    \begin{subfigure}{0.49\linewidth}
    \includegraphics[width=\textwidth]{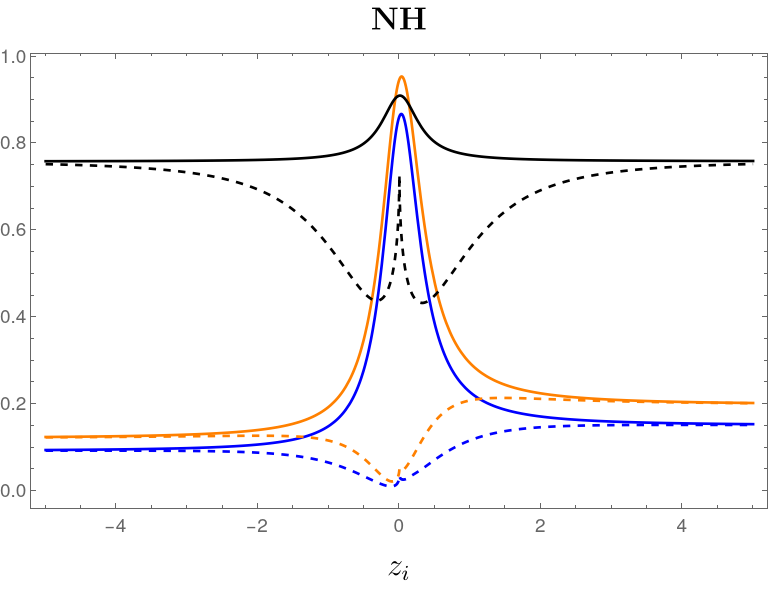}
    \label{fig:NdecayNH}
    \end{subfigure}
    \hfill
    \begin{subfigure}{0.49\linewidth}
    \includegraphics[width=\textwidth]{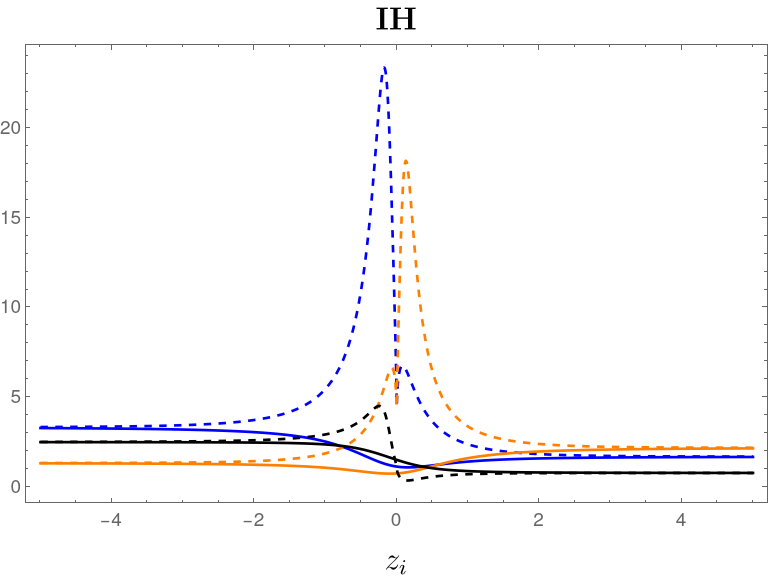}
    \label{fig:NdecayIH}
    \end{subfigure} 
    \caption{Ratios of $N_1 \to \ell_\alpha W$ branching fractions as functions of the free parameter $z_i$, for fixed values of $z_r$ ($z_r=0$ and $z_r=1$ for the solid and dashed lines, respectively), in the minimal Type-I Seesaw model. The colors indicate different ratios: $\text{BR}(N_1 \to eW)/\text{BR}(N_1 \to \mu W)$ (blue), $\text{BR}(N_1 \to eW)/\text{BR}(N_1 \to \tau W)$ (orange), and $\text{BR}(N_1 \to \tau W)/\text{BR}(N_1 \to \mu W)$ (black). Neutrino oscillation data are taken at their best-fit values for NH (left) and IH (right)~\cite{deSalas:2020pgw}.
    \label{fig:Ndecay}}
\end{figure}

On the other hand, one of the most commonly studied signature in seesaw models, and more generally in any model that induces neutrino masses, is the lepton flavor violating process $\ell_\alpha \rightarrow \ell_\beta \, \gamma$, with $\alpha \neq \beta$. At first order in $\mu_W$ and $m_\nu/m_W$ one has~\cite{Lavoura:2003xp}
\begin{equation} 
\text{BR}(\ell_\alpha \rightarrow \ell_\beta \,  \gamma) = \frac{\alpha_W^3 s_W^2}{1024 \pi^2}\left(\frac{m_{\ell_\alpha}}{m_W}\right)^4 \frac{m_{\ell_\alpha}}{\Gamma_{\ell_\alpha}} \left|\left(m_D^* M_N^{-1} (M_N^{-1})^\dagger m_D^T\right)_{\alpha \beta} \right|^2 \,. \label{eq:muegamma} 
\end{equation}
with $\alpha_W$ the weak structure constant, $s_W \equiv \sin \theta_W$ the sine of the weak mixing angle and $m_{\ell_\alpha}$ and $\Gamma_{\ell_\alpha}$ the mass and total decay rate of the charged lepton $\ell_\alpha$, respectively. This decay, in principle, depends on two real free parameters included in $m_D$. However, if we consider the case of degenerate $M_N$ we can make use of the reduced scenario 1 discussed in Sec.~\ref{sec:CIPheno}. Defining $M_N = m_N \, \id_2$ and using Eq.~\eqref{eq:paramH}, we can write
\begin{equation} 
\text{BR}(\ell_\alpha \rightarrow \ell_\beta \, \gamma) = \frac{\alpha_W^3 s_W^2}{1024 \pi^2}\left(\frac{m_{\ell_\alpha}}{m_W}\right)^4 \frac{m_{\ell_\alpha}}{\Gamma_{\ell_\alpha}} \left|\frac{\left(U D_{\sqrt{m}}^T H D_{\sqrt{m}} U^\dagger \right)_{\alpha \beta}}{m_N}\right|^2 \,, \label{eq:muegammaseesaws} 
\end{equation}
which depends on only one parameter, as shown in the previous Section. This is a very predictive scenario. In fact, the ratios among different flavor violating rates just depend on the free parameter $a$:
\begin{align}
\frac{\text{BR}(\ell_\alpha \rightarrow \ell_\beta \, \gamma)}{\text{BR}(\ell_\gamma \rightarrow \ell_\delta \, \gamma)} = \frac{m_{\ell_\alpha}^5 \, \Gamma_{\ell_\gamma}}{m_{\ell_\gamma}^5 \, \Gamma_{\ell_\alpha}} \left|\frac{\left(U D_{\sqrt{m}}^T H D_{\sqrt{m}} U^\dagger \right)_{\alpha \beta}}{\left(U D_{\sqrt{m}}^T H D_{\sqrt{m}} U^\dagger \right)_{\gamma \delta}}\right|^2 \, .
\end{align}
These ratios of BRs are shown in Fig.~\ref{fig:RatioBR}. Again, we fixed all neutrino oscillation parameters to their best-fit values for NH and IH~\cite{deSalas:2020pgw}. In this figure, an analogous discussion to that done for the $N$ decay holds. Since the branching ratio in Eq.~\eqref{eq:muegamma} is not injective in the parameter $a$, measuring a single process is not sufficient to determine the others, and at least two processes are required~\footnote{An exception occurs in the case of  $\mu \to e \gamma$ and $\tau \to e \gamma$, since they do determine each other even if the function is not injective. One can show that $|(U D_{\sqrt{m}}^T H D_{\sqrt{m}} U^\dagger)_{21}|^2= c_1 |(U D_{\sqrt{m}}^T H D_{\sqrt{m}} U^\dagger)_{31}|^2 + c_2$, with $c_1$ and $c_2$ two constants.}. However, in the asymptotic regime of the parameter space, this limitation disappears: observing just one process fully determines all the others. This Figure allows to easily rule the minimal Type-I seesaw as the explanation for neutrino mass generation. This would happen, for instance, if $\text{BR}(\mu \rightarrow e \,  \gamma) \approx \text{BR}(\tau \rightarrow e \,  \gamma) \approx \text{BR}(\tau \rightarrow \mu \,  \gamma)$ is experimentally found, since no value of the free parameter $a$ leads to this scenario.

\begin{figure}[ht!]
    \centering
    \begin{subfigure}{0.45\linewidth}
    \includegraphics[width=\textwidth]{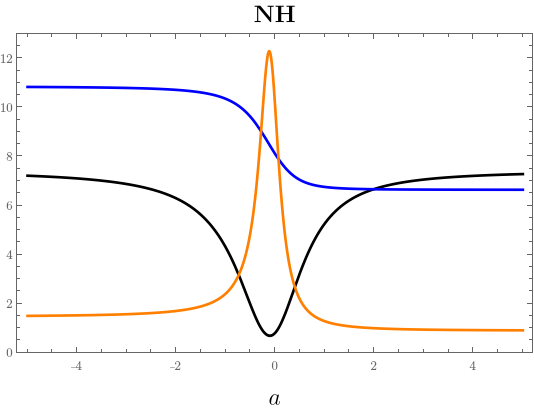}
    \label{fig:RatioBRNH}
    \end{subfigure}
    \hfill
    \begin{subfigure}{0.45\linewidth}
    \includegraphics[width=\textwidth]{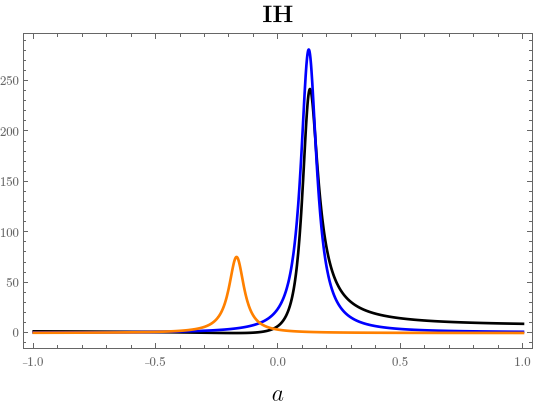}
    \label{fig:RatioBRIH}
    \end{subfigure}
     \caption{Ratios of $\ell_\alpha \to \ell_\beta \,  \gamma$ branching ratios as a function of the free parameter $a$ in the minimal Type-I Seesaw model.  $\text{BR}\left(\mu \to e \gamma\right)/\text{BR}\left(\tau \to e \gamma\right)$ in black, $\text{BR}\left(\tau \to \mu \gamma\right)/\text{BR}\left(\tau \to e \gamma\right)$ in blue and  $\text{BR}\left(\tau \to \mu \gamma\right)/\text{BR}\left(\mu \to e \gamma\right)$ in orange. Best-fit values and NH (left) and IH (right) are assumed for the neutrino oscillation data~\cite{deSalas:2020pgw}.
     \label{fig:RatioBR}}
\end{figure}

Let us remark that the results presented here are equally valid for any other model that satisfies Eq.~\eqref{eq:muegamma} and the conditions for using the $Y^\dagger Y$ parametrization. This is the case, for example, for the minimal versions of both the Inverse~\cite{Mohapatra:1986bd,Gonzalez-Garcia:1988okv} and Linear Seesaw~\cite{Akhmedov:1995ip,Akhmedov:1995vm,Malinsky:2005bi} models.

%
\subsection{Minimal Linear Seesaw with spontaneously broken lepton number}\label{sec:Linear}

We now consider a minimal Linear Seesaw model as studied in~\cite{CentellesChulia:2024uzv}. The SM particle content is extended with the addition of two singlet fermions, $N$ and $S$, as well as the doublet scalar $\chi$ (with the same hypercharge as the Higgs doublet) and the singlet scalar $\sigma$. The conservation of a global lepton number symmetry $U(1)_L$ is also assumed, with $L(L) = L(N) = 1$, $L(S) = -3$, $L(\chi) = -4$ and $L(\sigma) = -2$. The Yukawa Lagrangian of the model is given by
\begin{equation}
-\L = y_N \, \bar{L} \tilde{H} N + y_S \, \bar{L} \tilde{\chi} S + \lambda \, \sigma^* \bar{N}^c S +\frac{1}{2} \lambda_N \, \sigma \, \bar{N}^c N  + \hc \, ,
\end{equation}
where $\tilde \chi = i \sigma_2 \chi^*$ and $y_N$ and $y_S$ are $3 \times 1$ column matrices. We assume a vacuum characterized by
\begin{equation}
  \langle H \rangle = \frac{v}{\sqrt{2}} \begin{pmatrix}
    0 \\
    1 \end{pmatrix} \, , \quad \langle \chi \rangle = \frac{v_\chi}{\sqrt{2}} \begin{pmatrix}
    0 \\
    1 \end{pmatrix} \, , \quad \langle \sigma \rangle = \frac{v_\sigma}{\sqrt{2}}  \, .
\end{equation}
This induces the neutral fermion mass matrix
\begin{align}
\mathcal M =\begin{pmatrix}
    0 & m_D & m_L\\
    m_D^T & m_N & m_R\\
    m_L^T & m_R & 0
    \end{pmatrix} \, ,
\end{align}
where $m_D = y_N \, v/\sqrt{2}$, $m_L = y_S \, v_\chi/\sqrt{2}$, $m_N = \lambda_N \, v_\sigma/\sqrt{2}$ and $m_R = \lambda \, v_\sigma/\sqrt{2}$. If the scalars VEVs follow the hierarchy $v_\chi \ll v \ll v_\sigma$, neutrino masses are generated via a Linear Seesaw mechanism~\cite{Akhmedov:1995ip,Akhmedov:1995vm,Malinsky:2005bi}~\footnote{It is implicitly assumed that possible hierarchies in the Yukawa couplings are insufficient to overcome those in the VEVs.}
\begin{align}
  m_\nu =& -m_D \left(m_R^T \right)^{-1} m_L^T - m_L \, m_R^{-1} m_D^T \nonumber \\
  =& -\frac{1}{\sqrt{2}} \begin{pmatrix} 
y_N & y_S v_\chi/v
\end{pmatrix}
\frac{v^2}{ \lambda v_\sigma}
\begin{pmatrix} 
0 & 1\\
1 & 0
\end{pmatrix}
\begin{pmatrix}
 y_N^T \\
 y_S^T v_\chi/v
\end{pmatrix} \, . \label{eq:mnulinear}
\end{align}
Now we can compare with Eq.~\eqref{eq:mnu} and find the dictionary
\begin{align}
f = -\frac{1}{\sqrt{2}} \, , \quad Y = \begin{pmatrix}
 y_N^T \\
 y_S^T \, v_\chi/v
\end{pmatrix} \, , \quad M = \frac{v^2}{\lambda v_\sigma}
\begin{pmatrix} 
0 & 1\\
1 & 0
\end{pmatrix} \, .
\end{align}
On the other hand, the scalar sector includes an SM-like Higgs as well as the usual Goldstone bosons which provide masses to the electroweak gauge bosons. Additionally, the model contains two new heavy CP-even scalars with masses of the order $\sim v_\sigma^2$ and $\sim v_\sigma v_H/v_\chi$, one physical charged scalar with mass $\sim v_\sigma v_H/v_\chi$ and two new CP-odd scalars, one massive again with mass $\sim v_\sigma v_H/v_\chi$ and one physical massless Goldstone boson, the majoron ($J$) associated to the spontaneous breaking of $U(1)_L$.

Since the matrix $M$ has two degenerate eigenvalues, the model is a reduced scenario 2, as defined in Sec.~\ref{sec:CIPheno}. The Yukawa sector is then parametrized by just one free parameter, that again we denote by $a$, which necessarily implies that flavor observables will be predicted to follow very specific patterns. Let us first consider an exotic flavor violating decay present in this model: $\ell_\alpha \rightarrow \ell_\beta J$. The coupling between the majoron and the charged leptons is generated at one-loop and is given by~\cite{CentellesChulia:2024uzv,Herrero-Brocal:2023czw}
\begin{align} \label{eq:Jlinear}
\mathcal{L}_{\ell \ell J} = \frac{-iJ}{32\pi^2 v_\sigma} \, \bar{\ell} \, \Bigg\{\left(1 -\frac{ |\lambda_N|^2}{3 |\lambda|^2 }\right) M_\ell \textup{Tr}\left( y_N \, y_N^\dagger \right) \gamma_5 +\left( 2 -\frac{5  |\lambda_N|^2}{12  |\lambda|^2}\right) \left[ M_\ell \, y_N \, y_N^\dagger P_L - y_N y_N^\dagger \, M_\ell P_R \right] \Bigg\} \, \ell \, .
\end{align}
The coupling depends of the combination $y_N y_N^\dagger$, which can be written thanks to Eq.~\eqref{eq:paramH0} as
\begin{equation}
y_N y_N^\dagger =\sqrt{2}\,  a \, \frac{ \lambda v_\sigma}{v^2} \, \left( U D_{\sqrt{m}}^T 
\begin{pmatrix}
\pm 1 & i\\
-i    & \pm 1
\end{pmatrix}
 D_{\sqrt{m}} U^\dagger \right)^*\, .
\end{equation}
The free parameter $a$ appears as a global factor, thereby leaving the ratio between the different matrix elements of the coupling unaffected. The same conclusion holds for the branching ratio of the decay $\ell_\alpha \to \ell_\beta J$, which is given by
\begin{align}
\text{BR}\left( \ell_\alpha \to \ell_\beta J \right)&= \frac{1}{2 (8 \pi)^5} \left(\frac{m_{\ell_\alpha}}{v_\sigma}\right)^2\frac{m_{\ell_\alpha}}{\Gamma_{\ell_\alpha}} \left( 2 -\frac{5  |\lambda_N|^2}{12  |\lambda|^2}\right)^2 \left|\left(y_N y_N^\dagger\right)_{\alpha \beta} \right|^2 \nonumber \\
&= \frac{a^2 \, |\lambda|^2}{ (8 \pi)^5} \left(\frac{m_{\ell_\alpha}}{v}\right)^2\frac{m_{\ell_\alpha}}{\Gamma_{\ell_\alpha}} \left( 2 -\frac{5  |\lambda_N|^2}{12  |\lambda|^2}\right)^2 \frac{\left|\left(U D_{\sqrt{m}}^T 
\begin{pmatrix}
\pm 1 & i\\
-i    & \pm 1
\end{pmatrix}
 D_{\sqrt{m}} U^\dagger \right)_{\alpha \beta} \right|^2}{v^2} \, .
\end{align}
Again, using best-fit values for the neutrino oscillation parameters we obtain the definite predictions
\begin{align}
\frac{\text{BR}\left(\mu \to e J \right)}{\text{BR}\left(\tau \to e J \right)} \approx 2 \cdot 10^3 \, , && \frac{\text{BR}\left(\mu \to e J \right)}{\text{BR}\left(\tau \to \mu J \right)} \approx 3 \cdot 10^2 \, , && \frac{\text{BR}\left(\tau \to \mu J \right)}{\text{BR}\left(\tau \to e J \right)} \approx 6 \, ,  
\end{align}
for NH and
\begin{align}
\frac{\text{BR}\left(\mu \to e J \right)}{\text{BR}\left(\tau \to e J \right)} \approx 2 \cdot 10^3 \, , && \frac{\text{BR}\left(\mu \to e J \right)}{\text{BR}\left(\tau \to \mu J \right)} \approx 3 \cdot 10^3 \, , && \frac{\text{BR}\left(\tau \to \mu J \right)}{\text{BR}\left(\tau \to e J \right)} \approx 6 \cdot 10^{-1} \, ,  
\end{align}
for IH.\\

Let us now discuss the more conventional decay $\ell_\alpha \to \ell_\beta \, \gamma$. The new charged scalar decouples for large $v_\sigma$ and the phenomenology associated to $\ell_\alpha \to \ell_\beta \, \gamma$ remains the same as described in Sec.~\ref{subsec:seesaw}. In particular, Eq.~\eqref{eq:muegamma} holds true. Moreover, although the singlet fermions are not mass degenerate in general, in the limit $\lambda_N \ll \lambda$ they can be regarded as such, with an approximate mass $m_R$. We will then take this limit. In this case, the rate of $\ell_\alpha \to \ell_\beta \gamma$ is again described by the combination $Y^\dagger Y$. Furthermore, this matrix is related to $y_N^\dagger y_N$ by
\begin{equation} \label{eq:Ysplit}
Y^\dagger Y= y_N^* y_N^T + \frac{v_\chi^2}{v^2}y_S^* y_S^T \, .
\end{equation}
One can now parametrize each of the three matrices in the previous expression by using Eqs.~\eqref{eq:rs1} and \eqref{eq:rs2}. This leads to
\begin{equation}  \label{eq:relHH0}
\begin{pmatrix}
\sqrt{1+a_Y^2} & i a_Y\\
-i a_Y         & \sqrt{1+a_Y^2}
\end{pmatrix}
 =  a_{y_N} \, 
\begin{pmatrix}
\pm 1  & i\\
-i & \pm 1
\end{pmatrix}
+ a_{y_S} \, 
\begin{pmatrix}
\mp 1 & i\\
-i & \mp 1
\end{pmatrix} \, ,
\end{equation}
where the sign of $H_0$ has been chosen to be consistent.~\footnote{Let us recall that the signs of the diagonal elements of $H_0$ are chosen to make $H_0$ positive definite. Moreover, the off-diagonal terms of Eq.~\eqref{eq:Ysplit} impose different signs for $a_{y_N}$ and $a_{y_s}$. We have taken this into account by considering a relative sign in equation~\eqref{eq:relHH0}. Choosing the same signs would lead to a contradiction.} One can now determine the parameters $a_Y$ and $a_{y_S}$ in terms of $a_{y_N}$ as
\begin{align}
a_Y &= a_{y_N}-\frac{1}{4 a_{y_N}} \, , \\
a_{y_S}&=-\frac{1}{4 a_{y_N}} \, .
\end{align}
These results imply that the free parameter $a$ which appears in $\ell_\alpha \to \ell_\beta J$ also controls $\ell_\alpha \to \ell_\beta \gamma$. Moreover, the Linear Seesaw hierarchy $v_\chi \ll v$ implies $\frac{v_\chi^2}{v^2}y_S^* y_S^T \ll y_N^* y_N^T$, and then $a_{y_N} \gg 1$, which is nothing but $Y^\dagger Y= y_N^* y_N^T$. Thus, one finds the relation
\begin{align}
\frac{\text{BR}\left(\mu \to e J \right)}{\text{BR}\left(\mu \to e \gamma \right)} \approx \frac{|\lambda|^2}{8 \pi^3 \alpha_w^3 s_w^2}\frac{m_W^4 }{m_{\ell_\alpha}^2 v^2} \left(\frac{m_R}{v}\right)^2\approx 3.4 \cdot 10^7 \, |\lambda|^2\, \left(\frac{m_R}{v}\right)^2 \, .
\end{align}
For $\lambda \sim 1$, $m_R \sim v_\sigma \gg v$. Therefore, one expects $\text{BR}\left(\mu \to e J \right) \gg \text{BR}\left(\mu \to e \gamma \right)$.

\section{Summary}
\label{sec:summ}

Majorana neutrino mass models are among the most popular and well-motivated extensions of the SM. To determine which model correctly explains the origin of neutrino masses, we must test and potentially falsify the predictions of these models through experiments. For this purpose, it is crucial to understand the predictive power of each model.

In this work we have derived an extension of the Casas-Ibarra parametrization for the Yukawa, $Y$, which allows for an efficient counting of the free parameters in a given neutrino mass model. This parametrization provides a tool for identifying minimal models, i.e., models with very few free parameters and definite experimental predictions. Moreover, we have obtained a novel parametrization for the combination $y_i^\dagger y_i$ which is particularly convenient when confronting model predictions with experimental data. Using the previously developed parametrization for $Y$, we have shown that in certain scenarios $y_i^\dagger y_i$ manifests a reduction in the number of free parameters, which may be as low as just one.

We have finally illustrated our results with strong, testable predictions for two well-known models in the limit where our parametrization is applicable. In particular, we have considered the heavy neutrino decays $N_i \to \ell_\alpha W$ and flavor violating of charged leptons. The investigated observables offer a clear pathway to potentially rule out such models if future measurements contradict the predictions. Importantly, the predictions made for the models in Sec.~\ref{sec:appli} can be extended to other models with similar properties.

Our approach has two main limitations. First of all, models with new states beyond those required for neutrino mass generation may have additional contributions to the observables we have considered which cannot be accounted for in our analysis. This is for instance the case of the model in~\cite{Antusch:2023jok}, which is minimal according to our definitions but does not lead to definite predictions in $\ell_\alpha \to \ell_\beta \, \gamma$ due to the existence of additional non-correlated contributions. A second limitation of our approach is that the $y_i^\dagger y_i$ parametrization requires certain conditions to be applicable. However, even if a specific model does not satisfy these conditions, the predictions obtained from the parametrization can still provide important information about how far the model is from interesting theoretical limits, such as the case of degenerate heavy neutrinos. In any case, the limit where the parametrization holds serves as a powerful starting point for falsifying models that would otherwise be hard to test due to their large parameter freedom.

\section*{Acknowledgements}

The authors are grateful to Isabel Cordero-Carri\'on for carefully
reading and pointing out some missing details in the mathematical
proofs and to Martin Hirsch for fruitful discussions. Work supported
by the Spanish grants PID2023-147306NB-I00, CNS2024-154524 and
CEX2023-001292-S (MICIU/AEI/10.13039/501100011033), as well as
CIPROM/2021/054 (Generalitat Valenciana). The work of AHB is supported
by the grant No. CIACIF/2021/100 (also funded by Generalitat
Valenciana).

\appendix
\section{Equivalence with the master parametrization}
\label{app:Master}

Our analysis has taken advantage of our newly developed extended Casas-Ibarra parametrization, which simplifies the task of counting the free parameters in the Yukawa matrices. However, the same results can be obtained using the master parametrization~\cite{Cordero-Carrion:2018xre,Cordero-Carrion:2019qtu}, as we proceed to show in this Appendix. \\

Let us first of all introduce the notation and the matrices that appear in the master parametrization. For an extended discussion we refer to~\cite{Cordero-Carrion:2018xre,Cordero-Carrion:2019qtu}. We consider the general expression for the neutrino mass matrix
\begin{equation} \label{eq:mnumaster}
m_\nu = f \left( y_1^T \tilde{M} y_2 + y_2^T \tilde{M}^T y_1 \right) \, ,
\end{equation}
with $y_1$, $y_2$ and $\tilde{M}$ three $n_1 \times 3$, $n_2 \times 3$ and $n_1 \times n_2$ general matrices, respectively. First, we note that we can always rewrite Eq.~\eqref{eq:mnumaster} in the form of Eq.~\eqref{eq:mnu}:
\begin{equation} \label{eq:mnumaster2}
m_\nu = f \left( y_1^T \tilde{M} y_2 + y_2^T \tilde{M}^T y_1 \right) = f 
\begin{pmatrix} y_1 & y_2 \end{pmatrix}^T
\begin{pmatrix}
0           & \tilde{M}\\
\tilde{M}^T & 0
\end{pmatrix}
\begin{pmatrix}
y_1\\
y_2
\end{pmatrix} \, .
\end{equation}
Therefore, the Yukawa matrices of any Majorana neutrino mass model can be written in terms of neutrino oscillation parameters by means of our new extended Casas-Ibarra parametrization or with the master parametrization. Let us now concentrate on the latter. Equivalently to Eq.~\eqref{eq:Udef}, one can apply a singular value decomposition to both $m_\nu$ and $\tilde{M}$,
\begin{align}
  U^T m_\nu \, U = D_m = \bar{D}_{\sqrt{m}} \, \bar{D}_{\sqrt{m}}
\end{align}
and
\begin{align}
  V_1^T \tilde{M} V_2 = \hat{\Sigma} \, .
\end{align}
It is important to note that the definition of $\bar{D}_{\sqrt{m}}$ in~\cite{Cordero-Carrion:2018xre,Cordero-Carrion:2019qtu} differs from ours, given in Eq.~\eqref{eq:Dsqrtm},
\begin{equation} 
\bar{D}_{\sqrt{m}} = \left\{ \begin{array}{cl}
\textup{diag}\left( \sqrt{m_1},\sqrt{m_2},\sqrt{m_3} \right) \,, \quad & \textup{if} \, \,\, r_m = 3 \, , \\
& \\
P \cdot \textup{diag}\left( \sqrt{v},\sqrt{m_2},\sqrt{m_3} \right) \cdot P \,, \quad & \textup{if} \, \,\, r_m = 2 \, .
\end{array} \right.
\end{equation}
The master parametrization is then given by
\begin{align} \label{eq:masterparam}
y_1 &= \frac{1}{\sqrt{2 f}} V_1^\dagger 
\begin{pmatrix}
\Sigma^{-1/2} W A\\
X_1 \\
X_2
\end{pmatrix}
\bar{D}_{\sqrt{m}} U^\dagger \, , \\
\nonumber \\
y_2 &= \frac{1}{\sqrt{2 f}} V_2^\dagger 
\begin{pmatrix}
\Sigma^{-1/2} \hat{W}^* \hat{B}\\
X_3 \\
\end{pmatrix}
\bar{D}_{\sqrt{m}} U^\dagger \, , \label{eq:masterparam2}
\end{align}
where $\Sigma$ is a diagonal $r_{\tilde{M}} \times r_{\tilde{M}}$ matrix containing the $r_{\tilde{M}}$ non-zero singular values of $\tilde{M}$. $X_{1,2,3}$ are, respectively, $(n_2-r_{\tilde{M}})\times 3$, $(n_1-n_2)\times 3$ and $(n_2-r_{\tilde{M}})\times 3$ arbitrary complex matrices. $\hat{W}$ is an $r_{\tilde{M}} \times r_{\tilde{M}}$ unitary matrix formed by
\begin{equation}
\hat{W} = \begin {pmatrix} W & \bar{W} \end{pmatrix}
\end{equation}
with $W$ an $r_{\tilde{M}} \times r$ and $\bar{W}$ an $r_{\tilde{M}} \times (r_{\tilde{M}}-r)$ complex matrices. Here $r = \text{rank}(W)$. $A$ is given by
\begin{equation}
A = T C_1
\end{equation}
with $T$ an upper-triangular $r \times r$ invertible matrix with positive real values in the diagonal, and $C_1$ is an $r \times 3$ matrix. Finally, $\hat{B}$ is defined by
\begin{equation}
\hat{B} = \begin{pmatrix}
B \\
\bar{B}
\end{pmatrix}
\end{equation}
where $\bar{B}$ is an arbitrary $(r_{\tilde{M}}-r)\times 3$ complex matric and $B$ an $r \times 3$ complex matrix written as function of $T$, $C_1$, $C_2$ and $K$,
\begin{equation}
B = \left( T^T \right)^{-1} \left[C_1 C_2 + K C_1 \right]
\end{equation}
with $K$ is an antisymmetric $r \times r$ matrix and $C_2$ a $3 \times 3$ matrix. The matrices $C_1$ and $C_2$ take specific forms that depend on the values of $r_m$ and $r$ and can be found in~\cite{Cordero-Carrion:2018xre,Cordero-Carrion:2019qtu}.

\subsection{Finding the extended Casas-Ibarra from the master parametrization}

We will now show that the extended Casas-Ibarra parametrization can be recovered from the master parametrization. In order to do so, we must first unify the notation between \( \bar{D}_{\sqrt{m}} \) and \( D_{\sqrt{m}} \). To do this, we rewrite Eq.~\eqref{eq:Yparam} as
\begin{equation} \label{eq:Yparamapp}
Y = \frac{1}{\sqrt{f}}
V^\dagger  \, \begin{pmatrix}
 \, \sqrt{D_{r_M}}^{-1}\, R  \, O \,  \bar{D}_{\sqrt{m}} \, U^\dagger   \\
X
\end{pmatrix}  \equiv  \frac{1}{\sqrt{f}}
V^\dagger  \, \begin{pmatrix}
 \, \sqrt{D_{r_M}}^{-1}\, R   \,O    \\
X
\end{pmatrix}\, \bar{D}_{\sqrt{m}} \, U^\dagger \, ,
\end{equation}
where we have defined 
\begin{equation} \label{eq:Yparamapp2}
O = \left\{ \begin{array}{cl}
\id_3 \,, \quad & \textup{if} \, \, r_m = 3 \, , \\
& \\
\begin{pmatrix}
0 & 1 & 0\\
0 & 0 & 1
\end{pmatrix} \, P \,, \quad & \textup{if} \, \, r_m = 2 \, ,
\end{array} \right.
\end{equation}
and we have made use of the freedom of $X$ to redefine it as $X\equiv X \bar{D}_{\sqrt{m}} \, U^\dagger$. Next, we consider
\begin{align}
 y_1 = y_2 = \frac{1}{\sqrt{2}} Y \, , && M = \tilde{M} \, .
\end{align}
Since now $\tilde{M}=M$ is symmetric, $V_1 = V_2 \equiv V$ and $ \sqrt{D_{r_M}}^{-1}  \equiv \Sigma^{-1/2}$. On the other hand, from $y_1 = y_2$ we obtain~\footnote{This is a choice that we can always make without loss of generality, since Eq.~\eqref{eq:mnumaster2} can be used to reparametrize the Yukawas.}
\begin{align}
W A &= \hat{W}^* \hat{B} \, , \label{eq:WAWB} \\
\begin{pmatrix}
X_1 \\
X_2
\end{pmatrix} & = X_3 \equiv X \, . \label{eq:X1X2X3}
\end{align}
From Eq.~\eqref{eq:WAWB} one finds
\begin{align}
W^T \, W A &= B \, , \label{eq:WAWB1} \\
\bar{W}^T \, W A &= \bar{B} \, . \label{eq:WAWB2}
\end{align}
Now, using the relation
\begin{equation} \label{eq:ABBA}
A^T B + B^T A = \left\{ \begin{array}{cl}
\id_3 \,, \quad & \textup{if} \, \, r_m = 3 \, , \\
& \\
P 
\begin{pmatrix} 
0 & 0 & 0 \\
0 & 1 & 0 \\
0 & 0 & 1
\end{pmatrix}
P \,, \quad & \textup{if} \, \, r_m = 2 \, .
\end{array} \right. \, ,
\end{equation}

we find
\begin{equation} 
A^T W^T W A = \left\{ \begin{array}{cl}
\id_3 \,, \quad & \textup{if} \, \, r_m = 3 \, , \\
& \\
P 
\begin{pmatrix} 
0 & 0 & 0 \\
0 & 1 & 0 \\
0 & 0 & 1
\end{pmatrix}
P \,, \quad & \textup{if} \, \, r_m = 2 \, ,
\end{array} \right.
\end{equation}

and we can identify $\tilde{R} \equiv WA$. This allows us to write the master parametrization as
\begin{equation} 
y_1 \equiv \frac{1}{\sqrt{2}} Y
= \frac{1}{\sqrt{2 f}} \,
V^\dagger  \, \begin{pmatrix}
 \, \Sigma^{-1/2} \tilde{R} \\
X
\end{pmatrix} \, \bar{D}_{\sqrt{m}} \, U^\dagger \, .
\end{equation}
Finally, it is trivial to identify $\tilde{R} \equiv R O$ and, introducing $\bar{D}_{\sqrt{m}} U^\dagger$ into $X$, we find
\begin{align} 
Y &= \frac{1}{\sqrt{f}}
V^\dagger  \, \begin{pmatrix}
 \,\Sigma^{-1/2}\,R \, O \, \bar{D}_{\sqrt{m}} \, U^\dagger \\
X
\end{pmatrix} =  \frac{1}{\sqrt{f}} \,
V^\dagger  \, \begin{pmatrix}
 \,\Sigma^{-1/2} \,R \,  D_{\sqrt{m}} \, U^\dagger\\
X
\end{pmatrix} \, ,
\end{align}
which is the extended Casas-Ibarra parametrization.

\subsection{Finding the master parametrization from the extended Casas-Ibarra}

Now, we show that the master parametrization can be derived from the extended Casas-Ibarra parametrization. We begin by rewriting Eq.~\eqref{eq:mnumaster} as Eq.~\eqref{eq:mnu} by decomposing $Y$ and $M$
\begin{equation}
m_\nu = f \, Y^T M Y = f 
\begin{pmatrix} y_1 & y_2 \end{pmatrix}^T
\begin{pmatrix}
0           & \tilde{M}\\
\tilde{M}^T & 0
\end{pmatrix}
\begin{pmatrix}
y_1\\
y_2
\end{pmatrix} \, .
\end{equation}
Again, $\tilde{M}$ and $y_i$ are $n_1 \times n_2$ and $n_i \times 3$ matrices, respectively, with $n_1 + n_2 = n$. In this case, the singular value decomposition of $M$ reads
\begin{equation}
D_M = V^* M V^\dagger = v_1^* \tilde{M} v_2^\dagger +v_2^* \tilde{M}^T v_1^\dagger \, ,
\end{equation} 
with $v_i$ the $n \times n_i$ subblock of $V$ satisfying $v_i^\dagger v_j = \delta_{ij} \id_{n}$. Since $\tilde{M}$ is an $n_1 \times n_2$ matrix, it admits a singular value decomposition in terms of two unitary matrices $\tilde{v}_1$ and $\tilde{v}_2$. Therefore, it is easy to check that one can write
\begin{align}
v_i &= \mathcal{V}_i \tilde{v}_i \, , &&
\mathcal{V}_i = \frac{1}{\sqrt{2}}
\begin{pmatrix}
\id_{r_{\tilde{M}}}       & 0_{r_{\tilde{M}} \times (n_i -r_{\tilde{M}})}\\
\pm i \, \id_{r_{\tilde{M}}} & 0_{r_{\tilde{M}} \times (n_i -r_{\tilde{M}})}\\
0_{(n-2r_{\tilde{M}}) \times r_{\tilde{M}}} & u_i
\end{pmatrix} \, ,
\end{align}
where we choose the + (-) sign for $\mathcal{V}_1$ ($\mathcal{V}_2$), respectively. The $u_i$ matrix is $(n-2r_{\tilde{M}}) \times (n_i -r_{\tilde{M}})$ and verifies $u_i^\dagger u_i = \id_{ n_i -r_{\tilde{M}}}$. With these definitions we obtain
\begin{equation}
 V^* M V^\dagger = D_M = 
 \begin{pmatrix}
 \Sigma & 0_{r_{\tilde{M}} \times r_{\tilde{M}}}     & 0_{r_{\tilde{M}} \times (n- 2r_{\tilde{M}})}\\
 0_{r_{\tilde{M}} \times r_{\tilde{M}}}     & \Sigma & 0_{r_{\tilde{M}} \times (n- 2r_{\tilde{M}})}\\
 0_{(n- 2r_{\tilde{M}}) \times r_{\tilde{M}}}      &  0_{(n- 2r_{\tilde{M}}) \times r_{\tilde{M}}}      &  0_{(n- 2r_{\tilde{M}}) \times (n- 2r_{\tilde{M}})}
 \end{pmatrix}
 \, ,
\end{equation} 
with $\Sigma$ a diagonal matrix containing the $r_{\tilde{M}}$ non-vanishing singular values of $\tilde{M}$. Finally, decomposing the $R \, O$ matrix introduced in Eqs.~\eqref{eq:Yparamapp} and \eqref{eq:Yparamapp2} into $r_{\tilde{M}} \times 1$ subblocks,
\begin{equation}
R \, O = 
\begin{pmatrix}
r_{11} & r_{12} & r_{13} \\
r_{21} & r_{22} & r_{23}
\end{pmatrix} \, ,
\end{equation}
and using $v_i =\mathcal{V}_i \tilde{v}_i$ in Eq.~\eqref{eq:Yparamapp} we find
\begin{align}
y_i =& \frac{\tilde{v}_i^\dagger \mathcal{V}_i^\dagger}{\sqrt{f}} \,
 \, \begin{pmatrix}
 \, \sqrt{D_{r_M}}^{-1} R \,O \\
X
 \end{pmatrix} \,  \bar{D}_{\sqrt{m}} \, U^\dagger \nonumber \\
 =& \frac{\tilde{v}_i^\dagger}{\sqrt{2f}} \,
 \, \begin{pmatrix}
 \, \Sigma^{-1/2} \begin{pmatrix} r_{11} \mp i r_{21} & r_{12} \mp i r_{22} & r_{13} \mp i r_{23} \end{pmatrix} \\
u_i^\dagger X
\end{pmatrix} \,  \bar{D}_{\sqrt{m}} \, U^\dagger \nonumber \\
\nonumber\\
\equiv& \frac{1}{\sqrt{2f}} \,
\tilde{v}_i^\dagger \, \begin{pmatrix}
 \, \Sigma^{-1/2} \bar{\bar{y}}_i \\
u_i^\dagger X
\end{pmatrix} \,  \bar{D}_{\sqrt{m}} \, U^\dagger \, .
\end{align} 
It is now straightforward to show that 
\begin{equation}
\left(\bar{\bar{y}}_1\right)^T \bar{\bar{y}}_2 + \left(\bar{\bar{y}}_2\right)^T \bar{\bar{y}}_1= \left( R \, O \right)^T \left( R \, O\right) = 2 \left\{ \begin{array}{cl}
\id_3 \,, \quad & \textup{if} \, \, r_m = 3 \, , \\
& \\
P 
\begin{pmatrix} 
0 & 0 & 0 \\
0 & 1 & 0 \\
0 & 0 & 1
\end{pmatrix}
P \,, \quad & \textup{if} \, \, r_m = 2 \, .
\end{array} \right.
\end{equation}
These matrices are identical to those obtained by the authors in the proof of the master parametrization. As they show, they can be expressed as functions of the matrices defined at the beginning of this Appendix, $\bar{\bar{y}}_1 = W A$ and $\bar{\bar{y}}_2 = \hat{W}^* \hat{B}$. Hence, using $u_i^\dagger X \equiv Z_i$, which is an $(n_i - r_{\tilde{M}})\times 3$ general matrix, we find the expressions for $y_1$ and $y_2$
\begin{align}
y_1 &= \frac{1}{\sqrt{2 f}} \tilde{v}_1^\dagger 
\begin{pmatrix}
\Sigma^{-1/2} W A\\
Z_1 
\end{pmatrix}
\bar{D}_{\sqrt{m}} U^\dagger \, , \\
\nonumber \\
y_2 &= \frac{1}{\sqrt{2 f}} \tilde{v}_2^\dagger 
\begin{pmatrix}
\Sigma^{-1/2} \hat{W}^* \hat{B}\\
Z_2
\end{pmatrix}
\bar{D}_{\sqrt{m}} U^\dagger \, ,
\end{align}
which are completely equivalent to Eqs.~\eqref{eq:masterparam} and \eqref{eq:masterparam2}.

\subsection{Counting free parameters in the master parametrization}

In the master parametrization, the free parameters of a particular model are given by
\begin{align}
  \npar_{\text{free}} &= \npar_{X_1} + \npar_{X_2} + \npar_{X_3} + \npar_{T} + \npar_{K} + \npar_{\bar{B}} + \npar_{W} + \npar_{C_1} -\npar_{\textup{extra}} \nonumber \\
  &= 6 (n_1 + n_2 - r_{\tilde{M}}) + r (r +2r_{\tilde{M}} -7) + \npar_{C_1}-\npar_{\textup{extra}} \, ,
\end{align}
where $\npar_{C_1}$ is case-dependent, as shown in~\cite{Cordero-Carrion:2018xre,Cordero-Carrion:2019qtu}. We can recover the free parameters of the extended Casas-Ibarra parametrization by eliminating the redundant free parameters arising from $y_1 = y_2$. We have already seen that this condition is equivalent to imposing Eqs.~\eqref{eq:X1X2X3}, \eqref{eq:WAWB1} and \eqref{eq:WAWB2}. Eqs.~\eqref{eq:X1X2X3} and \eqref{eq:WAWB2} completely determine the free parameters of $\bar{B}$ and $X_3$, imposing thus $6(r_{\tilde{M}}-r)$  and $6(n_2 -r_{\tilde{M}})$ restrictions, respectively. Eq.~\eqref{eq:WAWB1} is trickier, since $A$ depends on the same matrices as $B$. Therefore, we write
\begin{align}
W^T W T C_1 = \left( T^T \right)^{-1} \left[C_1 C_2 + K C_1 \right] \, \Rightarrow \, C_1^T T^T W^T W T C_1 = C_1^T \left[C_1 C_2 + K C_1 \right] \, .
\end{align}
From the symmetry of the left-hand side of this equation it is straightforward to find
\begin{equation}
K = \frac{1}{2} \left(C_1^{-1}\right)^{-1} \left[ C_2\, , \, C_1^T C_1 \right] C_1^{-1} \, ,
\end{equation}
which fixes the $r(r-1)$ parameters of $K$. Moreover, one can write
\begin{align}
W^T W  = B A^{-1} \, ,
\end{align}
imposing $r(r+1)$ conditions since $W^T W$ is an $r \times r$ symmetric matrix. Finally, Eq.~\eqref{eq:ABBA} also leads to
\begin{equation} 
C_1^T T^T W^T W T C_1 = \left\{ \begin{array}{cl}
\id_3 \,, \quad & \textup{if} \, \, r_m = 3 \, , \\
& \\
P 
\begin{pmatrix} 
0 & 0 & 0 \\
0 & 1 & 0 \\
0 & 0 & 1
\end{pmatrix}
P \,, \quad & \textup{if} \, \, r_m = 2 \, .
\end{array} \right.
\end{equation}

which can be used to fix the free parameters of the $C_1$ matrix. Therefore, after this discussion, the free parameters of the master parametrization with the identification $y_1 = y_2$ reduce to
\begin{align}
\npar_{\text{free}} \equiv \npar_Y = 6 \, n_1 + 2r_{\tilde{M}} (r-3) -r(r+1) \, .
\end{align}
Finally, let us note that if $y_1= y_2$ then $r = r_m$ and hence
matches perfectly what we found in Section~\ref{subsec:freeparam}.

\section{Mathematical proofs}
\label{app:HH0}

In the following we provide proofs of some important theorems that elucidate the properties of the matrices $H$ and $H_0$ introduced in Sec.~\ref{sec:CIPheno}. We begin by presenting some lemmas. We consider a general Hermitian positive-semidefinite (PSD) quasi-orthogonal (QO) matrix, i.e.
\begin{equation}
\left\{ P \in \mathbb{C}^{n \times n} \;\middle|\; P= P^\dagger,\; v^\dagger P v \geq 0\; \forall v \in \mathbb{C}^n \setminus \{0\},\; \text{and}\; P P^T = P^T P = \alpha \, \id_n, \quad \alpha \in \mathbb{R}_{\geq 0} \right\} \, .
\end{equation}

\begin{lemma} \label{lemma1}\footnote{This, together with Lemma~\ref{lemma3}, are already well-known results in algebra. We include them here for completeness.}
  \textit{Let $P \in \mathbb{C}^{n \times n}$ be a PSD matrix of rank $p$. Then there exists a full-rank matrix $B \in \mathbb{C}^{p \times n}$ such that $P = B^\dagger B$.}
\end{lemma}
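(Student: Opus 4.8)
The plan is to use the spectral theorem for Hermitian matrices and to read off $B$ directly from the eigendecomposition. First I would invoke the spectral theorem: since $P$ is Hermitian, there exists a unitary matrix $U \in \mathbb{C}^{n\times n}$ and a real diagonal matrix $D = \textup{diag}(\lambda_1,\dots,\lambda_n)$ such that $P = U D U^\dagger$. The PSD hypothesis forces every eigenvalue to satisfy $\lambda_i \geq 0$, while the rank hypothesis forces exactly $p$ of them to be strictly positive. Reordering the columns of $U$ if necessary, I would assume $\lambda_1,\dots,\lambda_p > 0$ and $\lambda_{p+1} = \dots = \lambda_n = 0$.

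Next I would construct $B$ explicitly. Writing $u_i$ for the $i$-th column of $U$, the decomposition reads $P = \sum_{i=1}^p \lambda_i \, u_i u_i^\dagger$, since the vanishing eigenvalues drop out of the sum. I then define $B$ to be the $p \times n$ matrix whose $i$-th row is $\sqrt{\lambda_i}\, u_i^\dagger$; equivalently $B = \sqrt{D_p}\, U_p^\dagger$, where $U_p$ collects the first $p$ columns of $U$ and $\sqrt{D_p} = \textup{diag}(\sqrt{\lambda_1},\dots,\sqrt{\lambda_p})$. A direct computation gives $B^\dagger B = U_p \sqrt{D_p}\,\sqrt{D_p}\, U_p^\dagger = \sum_{i=1}^p \lambda_i \, u_i u_i^\dagger = P$, as required.

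Finally I would verify full rank. Since the columns $u_1,\dots,u_p$ are orthonormal (they belong to a unitary matrix) and each factor $\sqrt{\lambda_i} > 0$, the $p$ rows of $B$ are linearly independent, so $\rank(B) = p$ and $B \in \mathbb{C}^{p\times n}$ is full rank. This establishes the claim. There is no genuine obstacle here, as the result is classical; the only care needed is in bookkeeping the ordering of eigenvalues and in confirming that discarding the zero-eigenvalue directions leaves the product $B^\dagger B$ unchanged.
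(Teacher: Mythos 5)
Your proof is correct and follows essentially the same route as the paper's: both diagonalize $P$ via the spectral theorem, take the entrywise square root of the nonnegative diagonal eigenvalue matrix, and observe that the zero-eigenvalue columns drop out, yielding $B = \sqrt{D_p}\, U_p^\dagger$ with $B^\dagger B = P$. Your explicit check that the $p$ rows of $B$ are linearly independent (hence $B$ is full rank) is a small bookkeeping step the paper leaves implicit.
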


\begin{proof}
Since $P$ is Hermitian, it can be diagonalized by a unitary matrix,
\begin{equation} \label{eq:Cholesky}
P = V^\dagger D V =\left( \sqrt{D} V \right)^\dagger \sqrt{D} V \, ,
\end{equation}
where $D$ is a diagonal matrix. In addition, since $P$ is also PSD, $D$ is non-negative and we define $\sqrt{D}$, with $\sqrt{D}_{ii} =\sqrt{D_{ii}}$. Eq.~\eqref{eq:Cholesky} shows that a PSD matrix can always be written as the product of a matrix with its conjugate transpose. Finally, to show that we can choose this matrix as a $p\times n$ matrix, it is enough to observe that the columns of $V$ associated with the zero eigenvalues of $P$ do not contribute. Thus, we can write
\begin{equation}
P = \left( \sqrt{D} V \right)^\dagger \sqrt{D} V = \left(\sqrt{D_p} \, V_p \right)^\dagger \sqrt{D_p} \, V_p \, ,
\end{equation}
where $\sqrt{D_p}$ is the submatrix of $\sqrt{D}$ containing the $p$ positive eigenvalues, and $V_p$ is the corresponding submatrix of $V$. This concludes the proof of the lemma.
\end{proof}

\begin{lemma} \label{lemma2}
  \textit{Let $P \in \mathbb{C}^{n \times n}$ be a matrix with rank $p$. This matrix is PSD-QO if and only if it can be written as}
\begin{equation}
P = R^\dagger C^\dagger C R,
\end{equation}
\textit{with $R$ an $n \times n$ complex orthogonal matrix and $C$ a full-rank $p \times n$ matrix satisfying}
\begin{equation} \label{eq:Ccond}
C^\dagger C C^T C^* = \alpha \, \id_n \, .
\end{equation}
\end{lemma}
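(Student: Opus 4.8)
The plan is to prove both directions of the equivalence. For the \emph{if} direction, I would assume $P = R^\dagger C^\dagger C R$ with $R$ complex orthogonal (so $R^T R = R R^T = \id_n$) and $C$ satisfying the constraint~\eqref{eq:Ccond}, and verify the three defining properties of PSD-QO. Hermiticity is immediate since $(R^\dagger C^\dagger C R)^\dagger = R^\dagger C^\dagger C R$. Positive-semidefiniteness follows from $v^\dagger P v = \|C R v\|^2 \geq 0$. The rank equals $p$ because $C$ has full rank $p$ and $R$ is invertible. For quasi-orthogonality I would compute
\begin{equation}
P P^T = R^\dagger C^\dagger C R R^T C^T C^* R^* = R^\dagger C^\dagger C \, C^T C^* \, R^*,
\end{equation}
using $R R^T = \id_n$, and then apply the constraint $C^\dagger C C^T C^* = \alpha \id_n$ to get $P P^T = \alpha R^\dagger R^* = \alpha \id_n$, where the last step uses $(R^\dagger)^T = R^*$ together with orthogonality. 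The equality $P^T P = \alpha \id_n$ follows by an analogous computation (or by transposing and using Hermiticity).

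For the \emph{only if} direction, I would start from a PSD-QO matrix $P$ of rank $p$ and invoke Lemma~\ref{lemma1} to write $P = B^\dagger B$ with $B$ a full-rank $p \times n$ matrix. The goal is to extract from $B$ a factorization of the form $C R$ with $R$ complex orthogonal and $C$ satisfying~\eqref{eq:Ccond}. The natural route is to exploit the quasi-orthogonality condition $P P^T = \alpha \id_n$, which in terms of $B$ reads $B^\dagger B B^T B^* = \alpha \id_n$. I would consider the $p \times p$ Gram-type matrix $B B^T$ (note: this is the complex symmetric bilinear form, not the Hermitian one) and analyze its structure; the constraint forces strong conditions relating $B B^T$ to the identity on the relevant subspace. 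The key construction is to choose an appropriate complex orthogonal $R$ that brings $B$ into the canonical shape $C R$, essentially a ``complex-orthogonal QR/polar'' type decomposition adapted to the symmetric form.

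The main obstacle will be the \emph{only if} direction, specifically producing the complex orthogonal matrix $R$. Unlike the unitary case, complex orthogonal matrices preserve the bilinear form $x^T y$ rather than the Hermitian form, so the usual singular-value or spectral machinery does not apply directly, and one cannot simply diagonalize $B B^T$ by a unitary transformation while respecting orthogonality. I expect to need a decomposition that separates the ``orthogonal part'' (acting on $\mathbb{C}^n$ and preserving $x^T x$) from a full-rank factor $C$ on which the residual constraint~\eqref{eq:Ccond} is imposed. The delicate point is ensuring such an $R$ exists and is genuinely complex orthogonal; this likely requires checking that the relevant subspace on which $B$ acts is non-degenerate with respect to the symmetric form, so that a complex-orthogonal extension to all of $\mathbb{C}^n$ can be constructed (for instance via a Gram–Schmidt procedure for the bilinear form, being careful about isotropic vectors). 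Once $R$ is in hand, recovering $C$ and verifying~\eqref{eq:Ccond} should follow by substituting back and using $P P^T = \alpha \id_n$ once more.
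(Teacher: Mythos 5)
Your $(\Leftarrow)$ direction is correct and is essentially identical to the paper's. The genuine gap is in the $(\Rightarrow)$ direction, where you have set yourself a harder problem than the lemma actually poses. The statement imposes no canonical shape on $C$: it only asks for \emph{some} full-rank $p\times n$ matrix satisfying Eq.~\eqref{eq:Ccond}, and that condition is nothing but the quasi-orthogonality hypothesis rewritten. Indeed, starting from $P=B^\dagger B$ (Lemma~\ref{lemma1}) and setting $B = C D$ with $D$ \emph{any} complex orthogonal matrix (so $C = B D^T$ is full rank $p\times n$), one finds
\begin{equation}
C^\dagger C \, C^T C^* \;=\; D^* \left( B^\dagger B \, B^T B^* \right) D^\dagger \;=\; D^* \, P P^T \, D^\dagger \;=\; \alpha \, D^* D^\dagger \;=\; \alpha \, \id_n \, ,
\end{equation}
so Eq.~\eqref{eq:Ccond} holds automatically for \emph{every} choice of orthogonal $D\equiv R$, including the trivial one $R=\id_n$, $C=B$. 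This one-line substitution is precisely the paper's proof. Your closing remark that ``recovering $C$ and verifying~\eqref{eq:Ccond} should follow by substituting back'' is in fact the entire argument; you did not notice that it requires no particular $R$, so the ``main obstacle'' you identify does not exist for this lemma.

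Moreover, the route you sketch to overcome that supposed obstacle --- a complex-orthogonal QR/Gram--Schmidt adapted to the bilinear form $x^T y$, bringing $B$ to a canonical shape --- would genuinely fail in the case that matters most. For $\alpha=0$ (the case relevant for $H_0$ and Theorem~\ref{theorem2}), full rank of $B$ forces $B B^T = 0_{p\times p}$: multiplying $B^\dagger B\, B^T B^* = 0$ on the left by $B$ and on the right by $B^T$ gives $\left(B B^\dagger\right)\left(B B^T\right)\left(B B^\dagger\right)^* = 0$ with $B B^\dagger$ invertible. Hence every row of $B$ is isotropic and mutually null with respect to the bilinear form, so no Gram--Schmidt normalization in that form exists, exactly the degenerate situation you flagged as delicate. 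The reduction of $C$ to a canonical form (to $\id_n$ when $\alpha=1$, to an isotropic row when $\alpha=0$) is deliberately deferred in the paper to Theorems~\ref{theorem1} and~\ref{theorem2}, where it is accomplished with a \emph{unitary} Takagi decomposition of the symmetric matrix $C C^T$, not with any complex-orthogonal factorization.
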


\begin{proof}
\phantom{This text will be invisible} \\
\noindent $\left( \Leftarrow \right)$ \\
If $P= R^\dagger C^\dagger C R$, with $R R^T = R^T R = \id_n$ and $C$ a full-rank $p \times n$ matrix satisfying Eq. \eqref{eq:Ccond}, then $P$ is PSD by construction. Moreover,
\begin{equation}
P P^T = R^\dagger C^\dagger C R R^T C^T C^* R^* = R^\dagger C^\dagger C C^T C^* R^* = \alpha R^\dagger R^* = \alpha \, \id_n \, ,
\end{equation}
which completes the proof in this direction.\\

\noindent $\left( \Rightarrow \right)$ \\
According to Lemma~\ref{lemma1}, since $P$ is PSD, we can write $P = B^\dagger B$ with $B$ a full-rank $p \times n$ matrix. Being $B$ a full-ranked matrix allows us to factorize $B = C D$, where $C$ is full-rank $p \times n$ and $D$ is invertible $n \times n$. Choosing $D$ orthogonal ($D \equiv R$), the QO condition implies
\begin{equation}
P P^T = D^\dagger C^\dagger C D D^T C^T C^* D = \alpha \, \id_n \, .
\end{equation}
Since $D$ is orthogonal, this yields the condition on $C$:
\begin{equation}
C^\dagger C C^T C^* = \alpha \, \id_n \, .
\end{equation}
Thus,
\begin{equation}
P = R^\dagger C^\dagger C R,
\end{equation}
with $R$ orthogonal and $C$ satisfying the above condition.
\end{proof}

\begin{lemma} \label{lemma3}
  \textit{Let $P \in \mathbb{C}^{n \times n}$ be a matrix such that $P P^T = 0_{n \times n}$, then rank $P \leq n/2$.}
\end{lemma}

\begin{proof}
A matrix $P$ satisfying $P P^T = 0_{n \times n}$ is composed of $ 1\times n$ row vectors $p_i$ such that $p_i p_j^T = 0$ $\forall i, j$. Therefore, it suffices to show that in $\mathbb{C}^n$ there exist at most $n/2$ linearly independent vectors such that $p_i p_j^T = 0$. Let us define a bilinear form $B:\mathbb{C}^n \times \mathbb{C}^n \rightarrow \mathbb{C}^n $ by $B(v_i, v_j) = v_i v_j^T$. Let $U \in \mathbb{C}^n$ be the subspace of vectors that are mutually orthogonal with respect to this bilinear form, i.e., $B(u_i, u_j)= 0$ $\forall u_i,u_j \in U$. Then, by definition, $U = U^\perp$. Since $U$ is a subspace of $\mathbb{C}^n$, we have
\begin{equation}
\text{dim}(U)+\text{dim}(U^\perp) \leq \text{dim}(\mathbb{C}^n) = n \, ,
\end{equation}
But as $U = U^\perp$, it follows that
\begin{equation}
\text{dim}(U)\leq \frac{n}{2} \, .
\end{equation}
\end{proof}

\begin{theorem} \label{theorem1}
  \textit{Let $H$ be an $n \times n$ positive-definite matrix, then it is orthogonal ($H H^T = H^T H = \id_n$) if and only if it can be written as}
\begin{equation}
H = R^\dagger R,
\end{equation}
\textit{with $R$ a complex orthogonal matrix.}
\end{theorem}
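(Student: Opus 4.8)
The statement is a biconditional, so I would prove the two implications separately; the forward direction ($\Leftarrow$) is a routine verification, while the reverse direction ($\Rightarrow$) carries all the content. For ($\Leftarrow$), assuming $H = R^\dagger R$ with $R$ complex orthogonal, I would check the three defining properties in turn. Hermiticity is immediate, since $(R^\dagger R)^\dagger = R^\dagger R$. Positive-definiteness follows from $v^\dagger H v = \lVert R v \rVert^2 > 0$ for $v \neq 0$, using that $R$ is invertible (complex orthogonality gives $\det(R)^2 = \det(R R^T) = 1$, so $\det R \neq 0$). Orthogonality is a one-line computation: since complex conjugation commutes with transposition, $R^\dagger R^* = \overline{R^T R} = \id_n$, and therefore $H H^T = R^\dagger (R R^T) R^* = R^\dagger R^* = \id_n$, and symmetrically $H^T H = \id_n$.

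For the reverse direction, given $H$ positive-definite and orthogonal, I would produce $R$ explicitly as the unique positive-definite Hermitian square root $R = H^{1/2}$, whose existence follows from the spectral theorem (the $p=n$ case of Lemma~\ref{lemma1}). Because $R$ is Hermitian, $R^\dagger R = R^2 = H$ holds automatically, so the only remaining task is to show that $R$ is complex orthogonal. The two facts I would use are: (i) transposition commutes with the positive-definite square root, i.e. $(H^{1/2})^T = \overline{H^{1/2}} = (H^T)^{1/2}$, which holds because $\overline{H^{1/2}}$ is again positive-definite Hermitian and squares to $\overline{H} = H^T$; and (ii) the hypotheses $H H^T = H^T H = \id_n$ force $H$ and $H^T$ to commute, so they are simultaneously unitarily diagonalizable and the square root is multiplicative on them, $H^{1/2} (H^T)^{1/2} = (H H^T)^{1/2}$. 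Combining these gives $R R^T = H^{1/2} (H^T)^{1/2} = (H H^T)^{1/2} = \id_n$, and likewise $R^T R = (H^T H)^{1/2} = \id_n$, so $R$ is orthogonal.

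The main obstacle is step (ii): the matrix square root is not multiplicative in general, and the argument depends critically on $H$ and $H^T$ commuting and both being positive-definite Hermitian. I would make this rigorous by writing $H = W \Lambda W^\dagger$ and $H^T = W \Lambda' W^\dagger$ for a common unitary $W$ and positive diagonal $\Lambda, \Lambda'$, reducing the identity $H^{1/2}(H^T)^{1/2} = (H H^T)^{1/2}$ to the scalar fact $\sqrt{\lambda}\sqrt{\mu} = \sqrt{\lambda \mu}$ on positive eigenvalues. As an alternative route, the theorem can be recovered as the positive-definite, $\alpha = 1$ specialization of Lemma~\ref{lemma2}, by noting that there $C^\dagger C$ is itself positive-definite Hermitian and orthogonal and absorbing its square root into $R$; but the explicit square-root construction above is self-contained and exhibits $R$ directly.
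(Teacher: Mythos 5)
Your proof is correct, and its substantive half takes a genuinely different route from the paper's. The paper proves both directions as the $\alpha=1$, $p=n$ specialization of Lemma~\ref{lemma2}: ($\Leftarrow$) is immediate there by taking $C=\id_n$, while for ($\Rightarrow$) it writes $H = R^\dagger C^\dagger C R$ with $C^\dagger C C^T C^* = \id_n$, observes that this makes $C C^T$ a symmetric unitary matrix, and applies a Takagi decomposition with unit singular values to produce a unitary $U$ with $U^\dagger C \left( U^\dagger C \right)^T = \id_n$, so that $\tilde{R} = U^\dagger C$ is orthogonal and $H = \mathcal{R}^\dagger \mathcal{R}$ with $\mathcal{R} = \tilde{R} R$. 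You instead bypass Lemma~\ref{lemma2} entirely and exhibit the orthogonal factor canonically as the unique Hermitian positive-definite square root $R = H^{1/2}$: from $H H^T = H^T H = \id_n$ one has $H^T = \overline{H} = H^{-1}$, and square-root uniqueness gives $(H^{1/2})^T = \overline{H^{1/2}} = (H^T)^{1/2} = (H^{1/2})^{-1}$, whence $R R^T = R^T R = \id_n$; your reduction to the scalar identity $\sqrt{\lambda}\sqrt{\mu} = \sqrt{\lambda\mu}$ via a common diagonalizing unitary $W$ correctly plugs the only delicate point, namely multiplicativity of the square root on commuting positive-definite matrices (which commute here trivially, since $H^T = H^{-1}$). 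Comparing the two approaches: yours is more elementary and self-contained (spectral theorem plus uniqueness of the positive square root, no Takagi decomposition and no factorization lemma), and it yields an explicit canonical representative $R$ that is simultaneously Hermitian, positive-definite and orthogonal; the paper's route is less direct for this particular theorem but treats Theorem~\ref{theorem1} and Theorem~\ref{theorem2} uniformly as the two instances $\alpha = 1$ and $\alpha = 0$ of Lemma~\ref{lemma2}, which is the organizing principle of the appendix. Your ($\Leftarrow$) direction is an explicit verification of what the paper delegates to Lemma~\ref{lemma2} and is equally fine, and your closing alternative---specializing Lemma~\ref{lemma2} and absorbing the square root of the positive-definite orthogonal matrix $C^\dagger C$ into $R$---is sound, though as you note it still invokes your square-root construction rather than reproducing the paper's Takagi step.
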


\begin{proof}
  The proof of this Theorem makes use of Lemma~\ref{lemma2}, particularized with $\alpha = 1$ and $p = n$. \\

\noindent $\left( \Leftarrow \right)$ \\
Already proved by Lemma~\ref{lemma2}, taking $C = \id_n$. \\

\noindent $\left( \Rightarrow \right)$ \\
According to Lemma~\ref{lemma2}, $H$ can be expressed as $R^\dagger C^\dagger C R$, with $C$ satisfying Eq.~\eqref{eq:Ccond}. In this case, $C$ is invertible and we obtain
\begin{equation}
\left(C C^T\right)^\dagger = \left(C C^T\right)^{-1} \,.
\end{equation}
This shows that $C C^T $ is a symmetric unitary matrix. Such a matrix admits a Takagi decomposition with all singular values equal to one. Hence,
\begin{equation}
U^\dagger C (U^\dagger C)^T = \id_n,
\end{equation}
for some unitary $U$. Defining $\tilde{R} = U^\dagger C$, an orthogonal matrix, we get
\begin{equation}
H = R^\dagger \tilde{R}^\dagger U^\dagger U \tilde{R} R =R^\dagger \tilde{R}^\dagger \tilde{R} R = \mathcal{R}^\dagger \mathcal{R}
\end{equation}
with $\mathcal{R} = \tilde{R} R$ orthogonal, proving the statement.
\end{proof}

\begin{theorem} \label{theorem2}
  \textit{Let $H_0$ be an $n \times n$ PSD matrix, then it is isotropic if and only if}
\begin{equation}
H_0 = R^\dagger O^\dagger O R,
\end{equation}
\textit{where $R$ is an $n \times n$ complex orthogonal matrix and $O$ is an $m \times n$ complex isotropic matrix ($O O^T = 0$).}
\end{theorem}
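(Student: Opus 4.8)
The plan is to mirror the proof of Theorem~\ref{theorem1} almost verbatim, the only structural change being that Lemma~\ref{lemma2} is now invoked with $\alpha = 0$ instead of $\alpha = 1$. As there, the two implications are treated separately, and the content that genuinely differs lives entirely in the converse direction, where the matrix produced by the lemma is rectangular rather than square.

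For the forward (easy) direction $(\Leftarrow)$ I would not compute anything by hand but instead reduce directly to Lemma~\ref{lemma2}. Given $H_0 = R^\dagger O^\dagger O R$ with $R$ orthogonal and $O$ isotropic, the assumption $O O^T = 0$ gives $O^\dagger O O^T O^* = O^\dagger (O O^T) O^* = 0$, i.e.\ $O$ satisfies Eq.~\eqref{eq:Ccond} with $\alpha = 0$. Positive-semidefiniteness is then immediate (writing $H_0 = (OR)^\dagger (OR)$), and the $(\Leftarrow)$ part of Lemma~\ref{lemma2} with $C = O$ yields $H_0 H_0^T = \alpha\,\id_n = 0$, which is precisely the isotropy condition.

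For the converse $(\Rightarrow)$ I would start by upgrading the one-sided isotropy into a two-sided one: since $H_0$ is Hermitian, taking the complex conjugate of $H_0 H_0^T = 0$ and using $\overline{H_0} = H_0^T$ gives $H_0^T H_0 = 0$ as well. Hence $H_0 H_0^T = H_0^T H_0 = 0\cdot\id_n$, so $H_0$ is PSD-QO with $\alpha = 0$, and Lemma~\ref{lemma2} supplies $H_0 = R^\dagger C^\dagger C R$ with $R$ orthogonal and $C$ a full-rank $p\times n$ matrix (with $p=\rank H_0$, and $p\le n/2$ by Lemma~\ref{lemma3}) obeying $C^\dagger C C^T C^* = 0$.

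The single delicate step — and the point where the argument parts ways with Theorem~\ref{theorem1}, where $C$ was square and one needed a Takagi decomposition of $CC^T$ — is to sharpen $C^\dagger (C C^T) C^* = 0$ into the clean isotropy $C C^T = 0$, so that one may simply set $O = C$. The plan is to exploit the full row rank of $C$ directly: because $C$ has full row rank, $C C^\dagger$ is invertible, so $C^\dagger$ admits the left inverse $(C C^\dagger)^{-1} C$, and by complex conjugation $C^*$ admits a right inverse. Multiplying $C^\dagger (C C^T) C^* = 0$ on the left by the former and on the right by the latter collapses it to $C C^T = 0$. Taking $O = C$ then gives $H_0 = R^\dagger O^\dagger O R$ with $O$ an $m\times n$ isotropic matrix, completing the proof. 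I expect this cancellation to be the only real obstacle; everything else is bookkeeping inherited from Lemmas~\ref{lemma2} and~\ref{lemma3}.
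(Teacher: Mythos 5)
Your proposal is correct and takes essentially the same route as the paper: both directions reduce to Lemma~\ref{lemma2} with $\alpha=0$, and in the converse you exploit the full row rank of $C$ (right-invertibility) to collapse $C^\dagger C C^T C^* = 0$ to $C C^T = 0$ and rename $O = C$, exactly as the paper does. Your explicit left/right-inverse cancellation and the preliminary observation that Hermiticity upgrades $H_0 H_0^T = 0$ to $H_0^T H_0 = 0$ (so that the PSD-QO hypothesis of Lemma~\ref{lemma2} is genuinely met) merely make explicit steps the paper leaves implicit.
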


\begin{proof}
  The proof of this Theorem makes use of Lemma~\ref{lemma2}, particularized with $\alpha = 0$. \\

\noindent $\left( \Leftarrow \right)$ \\
Direct from Lemma~\ref{lemma2}, taking $C = O$. \\

\noindent $\left( \Rightarrow \right)$ \\
According to Lemma~\ref{lemma2}, $H_0$ can be expressed as $R^\dagger C^\dagger C R$, with $C$ satisfying Eq.~\eqref{eq:Ccond}. Since $C$ is a full-ranked $p \times n$ matrix, it admits a right inverse, and Eq.~\eqref{eq:Ccond}
implies
\begin{equation}
C C^T = 0_{p \times p} \,,
\end{equation}
so $C$ is isotropic. Therefore, simply renaming $C \equiv O$, one has
\begin{equation}
H_0 = R^\dagger O^\dagger O R,
\end{equation}
with $O$ an $m \times n$ isotropic matrix.
\end{proof}

Particularly useful for us are the cases $n=3$ and $n=2$, where rank $(H_0) \equiv r_{H_0} = 1$ (since $r_{H_0} \leq n/2$ due to Lemma~\ref{lemma3}) and $C$ is a single vector. In these scenarios, we can define $v = 1/\sqrt{||O||} \, O \equiv k \, O$ and write
\begin{equation}
H_0 = \frac{1}{k^2} \, R^\dagger v^\dagger v R \, ,
\end{equation}
with $k$ a constant factor.

\begin{theorem} \label{theorem3}
  \textit{Let $P$ be an $n \times n$ PSD-QO matrix of rank $p$, then $P$ is uniquely determined by a real antisymmetric matrix.}
\end{theorem}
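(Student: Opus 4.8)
The plan is to construct an explicit bijection between PSD-QO matrices $P$ of rank $p$ and real antisymmetric matrices, with the invariants $p$ and $\alpha$ treated as fixed background data. First I would apply Lemma~\ref{lemma2} to write $P = R^\dagger C^\dagger C R$, where $R\in\mathbb{C}^{n\times n}$ is complex orthogonal and $C$ is a full-rank $p\times n$ factor satisfying Eq.~\eqref{eq:Ccond}. All of the continuous freedom of $P$ then resides in $R$, so the problem reduces to isolating the part of $O(n,\mathbb{C})$ that actually moves $P$ and showing it is labelled by a real antisymmetric matrix.

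The central tool is a polar decomposition of complex orthogonal matrices: any $R$ with $R^T R = \id_n$ factors uniquely as $R = O\, e^{iB}$ with $O$ real orthogonal and $B$ real antisymmetric. To see this I would note that $R R^T = \id_n$ as well, so $R^\dagger R$ is Hermitian, positive-definite and orthogonal; its principal logarithm is therefore simultaneously Hermitian and antisymmetric, hence purely imaginary, and equals $2iB$ for a unique real antisymmetric $B$. Thus $(R^\dagger R)^{1/2} = e^{iB}$ with $iB$ Hermitian, and the remaining polar factor $O = R\,e^{-iB}$ is at once unitary and complex orthogonal, i.e. $O^\dagger = O^{-1} = O^T$; this forces $\bar O = O$, so $O$ is real orthogonal.

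Substituting and using $(e^{iB})^\dagger = e^{iB}$ and $O^\dagger = O^T$ gives $P = e^{iB}\,(O^T C^\dagger C\, O)\, e^{iB}$. Since $CO$ again satisfies Eq.~\eqref{eq:Ccond}, the real orthogonal factor $O$ can be absorbed into the Lemma~\ref{lemma2} factorization, so that $P$ depends on the compact part of $R$ only through the reference Gram matrix, and on the noncompact part only through the real antisymmetric $B$; this already reproduces the count $\npar=n(n-1)/2$. When $\alpha>0$ one has $p=n$, the factor is invertible and can be normalized to $\sqrt{\alpha}\,\id_n$, yielding the transparent bijection $P = \sqrt{\alpha}\, e^{2iB}$. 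When $\alpha=0$, Lemma~\ref{lemma3} gives $p\le n/2$, and in the rank-one cases $n=2,3$ relevant here the factor collapses to a single isotropic vector, so that conjugation by $e^{iB}$ reproduces exactly the $H_0$ parametrizations of Sec.~\ref{sec:CIPheno}.

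Uniqueness amounts to recovering $B$ from $P$. For $\alpha>0$ this is immediate, since $P/\sqrt{\alpha}=e^{2iB}$ is Hermitian positive-definite and its principal logarithm is single-valued, returning $2iB$ and hence a unique real antisymmetric $B$. The main obstacle I anticipate is the rank-deficient case $\alpha=0$: there the map is no longer the exponential of the full matrix, the reference Gram is canonical only up to its invariant norm, and one must check that absorbing $O$ removes all compact freedom without collapsing any of the $n(n-1)/2$ parameters—equivalently, that the stabilizer of the reference factor inside the family $\{e^{iB}\}$ is trivial. I expect this bookkeeping, rather than any single sharp estimate, to be the delicate point, and it is precisely what the explicit rank-one computations behind Eqs.~\eqref{eq:rs2N3} and \eqref{eq:rs2} confirm.
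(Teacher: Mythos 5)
Your strategy is genuinely different from the paper's, and for $\alpha>0$ it is complete and arguably more illuminating: since in that case $P$ is Hermitian positive-definite with $P\,\bar{P}=PP^T=\alpha\,\id_n$, its principal logarithm is simultaneously Hermitian and antisymmetric, hence purely imaginary, giving the clean global form $P=\sqrt{\alpha}\,e^{2iB}$ with $B$ real antisymmetric and uniquely recoverable from $P$. One can check this reproduces the paper's Eq.~\eqref{eq:rs1} exactly, with $a=\sinh 2b$ for $n=2$. Two small remarks on this half: your final uniqueness argument uses only $P$ itself, so the detour through Lemma~\ref{lemma2} and the Cartan factorization $R=O\,e^{iB}$ is unnecessary there; and the intermediate step ``normalize the factor to $\sqrt{\alpha}\,\id_n$'' is slightly loose, since after absorbing $O$ you have $P=e^{iB}\,G\,e^{iB}$ with $G=\sqrt{\alpha}\,e^{2iB''}$, and a product of non-commuting exponentials is not manifestly of the form $e^{2iB'}$ --- it is the direct logarithm argument applied to $P$ that actually closes the case.

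The genuine gap --- which you flag yourself but do not close --- is the $\alpha=0$ case, and it is not mere bookkeeping: it is exactly the case needed for $H_0$ and reduced scenario 2. There $P$ is no longer the exponential of anything, and your parametrization $P=e^{iB}\,C_0^\dagger C_0\,e^{iB}$ for a fixed reference isotropic factor $C_0$ requires proving both injectivity in $B$ and surjectivity onto all rank-$p$ isotropic PSD matrices; the latter amounts to showing that the noncompact factors $\{e^{iB}\}$ act transitively on scaled isotropic $p$-frames. For $n=2$ this can be checked by hand (one finds $e^{iB}v_0^\dagger=e^{\pm b}v_0^\dagger$, so the action is a pure rescaling and the bijection with the single parameter of Eq.~\eqref{eq:rs2} is immediate), but for $n=3$ the orbit/stabilizer analysis is left entirely open in your writeup. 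The paper avoids all of this with a uniform, elementary argument valid for every $\alpha\geq 0$ and every rank: writing $P=S+iA$ with $S,A$ real, the QO condition yields $S^2=\alpha\,\id_n-A^2$ together with $[A,S]=0$; positive semidefiniteness of $P$ forces $S$ to be PSD, and since $\alpha\,\id_n-A^2$ is PSD it has a \emph{unique} PSD square root, so $A=\mathrm{Im}\,P$ determines $S$ and hence $P$ outright, with the commutation relation shown to hold automatically. Either adopting that real--imaginary decomposition for $\alpha=0$, or actually proving the transitivity/stabilizer claim you anticipate, is needed to make your proof complete.
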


\begin{proof}
We can decompose $P = S + i A$ with $S=S^T$ and $A= -A^T$ being two real matrices. From the QO condition,
\begin{equation}
\alpha \, \id_n = P P^T = (S + i A)(S^T + i A^T) = S^2 + A^2 + i(AS - SA)\, .
\end{equation}
Applying the equality to both real and imaginary parts, we obtain
\begin{align}
S^2 &= \alpha \, \id_n - A^2 \, , \label{eq:SA2P} \\
[A, S] &= 0 \, . \label{eq:ASconmP}
\end{align}
Since $A$ is antisymmetric, its eigenvalues are pure imaginary (or zero) and those of $-A^2$ are positive (or zero). Therefore, the right-hand side of Eq.~\eqref{eq:SA2P} is a real PSD matrix which then admits a unique PSD square root. Moreover, $P$ being PSD implies the positive-semidefiniteness of $S$ when applying the definition for a real vector, $u \in \mathbb{R}$,
\begin{equation}
u^T P u = u^T P u + i u^T A u =  u^T P u \geq 0 \, .
\end{equation}
Consequently, $S$ in uniquely defined by $A$. We can find $S$ as follows. Since $\alpha \,\id_n - A^2$ is a real symmetric matrix it can be diagonalized by an orthogonal matrix, $Q$,
\begin{equation}
Q \left( \alpha\, \id_n - A^2 \right) Q^T =  D^2 \, ,
\end{equation}
where $D$ is a real non-negative diagonal matrix. Then
\begin{equation}
S^2 = \alpha \,\id_n - A^2 = Q^T  D^2 Q = Q^T D Q Q^T D Q
\end{equation}
and
\begin{equation}
S =  Q^T D Q \, ,
\end{equation}
where $D$ is the non-negative square root of $D^2$: $D_{ii} = \sqrt{D^2_{ii}}$. Therefore, constructing $P$ in this way ensures that $P$ is PSD. Furthermore, it is trivial to show that Eq.~\eqref{eq:ASconmP} holds. From Eq.~\eqref{eq:SA2P} $[S^2,A]=0$ and this is true in any basis. Thus, in the basis where $S^2$ is diagonal we find
\begin{equation}
D^2 Q A Q^T - Q A Q^T D^2 = \left( Q A Q^T \right)_{ij} \left( D^2_{ii}- D^2_{jj} \right) = 0 \, .
\end{equation}
Therefore, in this basis the conmutator $C=[S,A]$ will be given by
\begin{equation}
Q C Q^T = D Q A Q^T - Q A Q^T D = \left( Q A Q^T \right)_{ij} \left( D_{ii}- D_{jj} \right) = 0 \, ,
\end{equation}
and $[S,A]= 0$ in any basis. From all this, it follows that $S$, and thus $P$, is univoquely determined by $A$.
\end{proof}

This Theorem implies that the degrees of freedom of $P$ are those of a real $n \times n$ antisymmetric matrix ,
\begin{equation}
  \npar_{P} = \frac{n(n-1)}{2} \, .
\end{equation}

Finally, considering the cases $\alpha = 1 \, , 0$ and $n= 3 \, , 2$ we find the parametrizations for the matrices $H$ and $H_0$ given in Sec.~\ref{sec:CIPheno}. For $H$ we obtain
\begin{align}
H &= \frac{1}{x^2}
\begin{pmatrix}
c^2 + \left(a^2 + b^2\right)\sqrt{1 + x^2} & 
a b \left(-1 + \sqrt{1 + x^2})\right)+i a x^2  & 
a c\left(1-\sqrt{1 + x^2} \right)+ i b x^2  \\
a b  \left(-1 + \sqrt{1 + x^2}\right)-i a x^2 & 
b^2 + \left(a^2 + c^2\right)\sqrt{1 + x^2} & 
a b \left(-1 + \sqrt{1 + x^2}\right)+i c x^2  \\
a c\left(1-\sqrt{1 + x^2} \right) - i b x^2  & 
a b \left(-1 + \sqrt{1 + x^2}\right)- i c x^2 & 
a^2 + \left(b^2 + c^2\right)\sqrt{1 + x^2}
\end{pmatrix}
\end{align}
and
\begin{align}
H &= 
\begin{pmatrix}
\sqrt{1+a^2} & i a\\
-i a         & \sqrt{1+a^2}
\end{pmatrix} \, ,
\end{align}
while $H_0$ is given by,
\begin{align}
H_0= \frac{1}{x}
\begin{pmatrix}
a^2+b^2      & b c+i a x  & -a c+ i b x  \\
b c -i a x   & a^2 + c^2  &  a b + i c x \\ 
-a c+ i b x  & a b -i c x &  b^2 +c^2 
\end{pmatrix}
&& \text{and} &&
H_0 = a
\begin{pmatrix}
\pm 1 & i\\
-i    & \pm 1
\end{pmatrix}\, .
\end{align}
In the $n=2$ case, the sign in $H_0$ must be chosen according to the sign of $a$ in order to ensure positive-semidefiniteness.

\bibliographystyle{JHEP}
\bibliography{refs}

\end{document}